\definecolor{my-blue}{rgb}{0.0,0.0,0.6}
\definecolor{my-red}{rgb}{0.5,0.0,0.0}
\definecolor{my-green}{rgb}{0.0,0.5,0.0}
\definecolor{nicos-red}{rgb}{0.75,0.0,0.0}
\definecolor{light-gray}{gray}{0.6}
\definecolor{really-light-gray}{gray}{0.8}
\definecolor{sussexg}{rgb}{0.0,0.5,0.5}
\definecolor{sussexp}{rgb}{0.5,0.0,0.5}
\newtheorem{theorem}{\sc Theorem}[section]
\newtheorem{proposition}[theorem]{\sc Proposition}
\newtheorem{corollary}[theorem]{\sc Corollary}
\newtheorem{assumption}[theorem]{\bf Assumption}
\numberwithin{equation}{section}
\theoremstyle{remark}
\newtheorem{remark}[theorem]{Remark}
\newtheorem{example}{\bf Example}
\newcommand{\be}{\begin{equation}}
\newcommand{\ee}{\end{equation}}
\def\bE{\mathbb{E}}
\def\bN{\mathbb{N}}
\def\bP{\mathbb{P}}
\def\bZ{\mathbb{Z}}
 \def\Z{\bZ}
\def\N{\bN}
\def\E{\bE}
\def\P{\bP} 
\definecolor{darkgreen}{rgb}{0.0,0.5,0.0}
\definecolor{darkblue}{rgb}{0.0,0.0,0.3}
\definecolor{nicosred}{rgb}{0.65,0.1,0.1}
\definecolor{light-gray}{gray}{0.7}
\begin{document}
\usdate
\title[Derivation of triple closures]
{Theoretical and numerical considerations of the assumptions behind triple closures in epidemic models on networks}

\author[N.~Georgiou]{Nicos Georgiou}
\address{Nicos Georgiou\\ University of Sussex\\ Department of  Mathematics \\ Falmer Campus\\ Brighton BN1 9QH\\ UK.}
\email{n.georgiou@sussex.ac.uk}
\urladdr{http://www.sussex.ac.uk/profiles/329373} 

\author[I. Z.~Kiss]{Istv\'an Z. Kiss}
\address{Istvan Kiss\\ University of Sussex\\ Department of  Mathematics \\ Falmer Campus\\ Brighton BN1 9QH\\ UK.}
\email{i.z.kiss@sussex.ac.uk}
\urladdr{http://www.sussex.ac.uk/profiles/211073}

\author[P.~Simon]{ P\'eter Simon}
\address{P\'eter Simon\\ Institute of Mathematics, E\"otv\"os Lor\'and University Budapest, and
Numerical Analysis and Large Networks Research Group, Hungarian Academy of Sciences\\ Hungary.}
\email{simonp@cs.elte.hu}
\urladdr{http://simonp.web.elte.hu/index\_en.html}

\thanks{NG was partially supported by the EPSRC First Grant EP/P021409/1. IZK was partially supported by the Leverhulme Trust, Research Project Grant RPG-2017-370. PS was partially supported by the Hungarian Scientific Research Fund, OTKA, Grant no. 115926.}

\keywords{closures, multinomial link model, epidemic models, SIS}

\subjclass[2000]{92D30, 90B10, 90B15} 

\date{\today}

\begin{abstract}
Networks are widely used to model the contact structure within a population and in the resulting models of disease spread.
While networks provide a high degree of realism, the analysis of the exact model is out of reach and
even numerical methods fail for modest network size. Hence, mean-field models (e.g. pairwise) focusing on describing the
evolution of some summary statistics from the exact model gained a lot of traction over the last few decades.
In this paper we revisit the problem of deriving triple closures for pairwise models and we investigate in detail the assumptions
behind some of the well-known closures as well as their validity. Using a top-down approach we start at the level of the entire graph and work down to the level
of triples and combine this with information around nodes and pairs. We use our approach to derive many of the existing closures
and propose new ones and theoretically connect the two well-studied models of multinomial link and Poisson link selection. 
The theoretical work is backed up by numerical examples to highlight where the commonly used assumptions may fail and
provide some recommendations for how to choose the most appropriate closure when using graphs with no or modest degree heterogeneity. 
\end{abstract}

\maketitle

\section{Introduction}

Many complex systems are forbiddingly high dimensional and one efficient way to deal with such a challenge is to focus on some
summary statistics or moments of the full-system. However, more often than not, the evolution of a moment (usually a coarse grained view or quantity that is computable from the full
model) depends on higher-order moments. In~\cite{kuehn2016moment}, the author summarises the four main steps of a typical moment closure based approach. 
These are: (a) select the moments and their hierarchy, (b) write down evolution equations for the moments, (c) derive and apply the moment-closure and, finally, 
(d) justify and validate the choice of moments and closures and perform further numerical tests.

Unfortunately, there is no single or precise way in which moment hierarchies and moment closures are derived. Often these rely on empirical or numerical 
observations, may only work for particular problems or in  specific contexts and rigorous mathematical proofs for closures are rare. As a result in
this paper, we revisit some of the existing closures in the context of epidemic models and show that these can be derived by a 
hybrid method combining a top-down approach (counting over the entire graph/network) and a bottom-up approach where 
assumptions about the states of the neighbours of nodes are made.

This is a well-know and well-studied area with some initial important results by \cite{rand1999correlation,keeling1999effects}. In particular they showed that closures 
for some epidemic models are possible by starting at the level of nodes and considering their degrees as a random variable with a given mean plus 
some random error with zero mean. Coupling this with knowledge at the graph-level about the counts of nodes, links and triples in various states, 
they managed to derive two distinct closures which we will  investigate in detail. One of the main assumption made in their approach is about the distribution of the states of nodes around a given node  (i.e. Multinomial or Poisson).

In what follows we start from the entire graph and aim to re-derive some of the existing closures and propose two new ones, 
but with the main contribution being that we provide a rather general approach and we illustrate it by using a hypothetical (i.e. uniform) form of the 
distribution of the sates of the nodes around a give node.


The paper is structured as follows. In Section~\ref{sec:model} we give the main ingredients of the model including the network and epidemic dynamics on it together with the unclosed  pairwise model for SIS dynamics. In Section~\ref{sec:prob} we present the top-down derivation of an exact expression for the expected value of $k$-tuples of states in the epidemics via probabilistic considerations. In Section \ref{sec:closures} we re-derive some of the classical closures used in the literature and present two new ones, that performs well in numerical test. We also highlight when the Multinomial and Poisson assumptions work and when they are hard to distinguish. 
The final part of the section explains how to rigorously derive the Poisson model from the Multinomial link model, both widely used in the literature. 
 In Section~\ref{sec:numerics_and_threshold} we provide a large number of numerical tests of the closures and show that the pairwise model with the new closure leads to an
epidemic threshold which is well-known in the literature. Further results and applications of our method are presented in Section~\ref{sec:further_extensions}.
Finally, Section~\ref{sec:disc} is dedicated to a discussion and summary of our findings.

\section{Model: network, epidemic dynamics and mean-field models}\label{sec:model}
The starting point is to model the contact structure of a population of $N$ individuals as an undirected  and unweighted network $\mathcal G=(\mathcal V, \mathcal E)$, so that $|\mathcal V| = N$. Such a graph can also be represented  by adjacency matrix $G=(g_{ij})_{i,j=1,2,\dots, N}$, where $g_{ij}=1$ if nodes $i$ and $j$ are connected and zero otherwise. Self-loops are excluded, so $g_{ii}=0$ and $g_{ij}=g_{ji}$ for all $i,j=1,2, \dots N$. While we will consider a general epidemic model where nodes can be in an arbitrary number of discrete states (i.e. $m$ different states $A_1, A_2, \ldots A_m$), our examples will focus on the standard susceptible-infected-susceptible (SIS) epidemic dynamics on a network. The SIS dynamics is driven by two processes: (a) infection and (b) recovery from
infection. Infection spreads from an infected node (I) to any of its susceptible neighbours (S) and this is modelled as a Poisson point process with per-link infection rate $\tau$.
Infectious nodes recover from infection at constant rate $\gamma$, independently of the network, and become susceptible again. The resulting model is a continuous-time Markov Chain
over a state space with $2^N$ elements. This consists of all arrangements of length $N$ with each entry being either S or I independently. While this is easy to generalise and write
down theoretically, the numerical integration of the system becomes intractable even for modest values of $N$~\cite{KissMillerSimon}.

One way to deal with such a high-dimensional model is to derive mean-field approximations for some of the summary statistics of the exact process.
Probably of most interest is the expected number of infected nodes over time. There are many different approaches that can help achieve this~\cite{KissMillerSimon},
but almost all rely on starting at `node' or `node and its neighbourhood-level' and proceed by writing down differential equations for their evolution. This immediately
leads to a dependency on higher-order moments since whatever quantity we focus on, its evolution will depend on the type and states of
the neighbouring nodes, e.g. singles depend on pairs and pairs depend on triple. One straightforward mean-field model, also extensively used in this paper, is the pairwise model which is given below,
\begin{align}
\dot{[S]}&=-\tau[SI]+\gamma[I], \label{eq:PW_SIS_S} \\
\dot{[I]}&=\tau[SI] -\gamma[I],\label{eq:PW_SIS_I} \\
\dot{[SS]}&=-2\tau[SSI]+2\gamma[SI], \label{eq:PW_SIS_SS} \\
\dot{[SI]}&=\tau([SSI]-[ISI]-[SI])-\gamma([SI]-[II]), \label{eq:PW_SIS_SI}\\
\dot{[II]}&=2\tau([ISI]+[SI])-2\gamma[II]. \label{eq:PW_SIS_II}
\end{align}
Here, $[\cdot]$ stands for the expected number of the respective quantities. 
This unclosed model was derived from the full system of master equations and it was proved to be exact~\cite{Taylor2012JMB}.

The equations above are straightforward to interpret~\cite{KissMillerSimon}. The evolution equation $[I]$, see~\eqref{eq:PW_SIS_I}, has two terms: (a) a positive term which is proportional to the expected
number of S-I links ($[SI]$) and represents the incoming flux of new infections, and (b) a negative term which is proportional to $[I]$ and stands for the recovery of infected nodes. More
importantly, we notice that singles depend on pairs and this hierarchy of dependency continues whereby $[SI]$ links are created at rate $\tau[SSI]$, meaning that one of the S nodes in
an S-S link can be infected by an external node, e.g. S-S-I. Equally, the $[SI]$ links are depleted due to events within pair, i.e. I can infect S or I can recover, or events from outside the
pair, i.e. an external node infecting S, that is I-S-I. It is now clear that there is hierarchy of dependency on ever higher moments.

Keeping the hierarchy would lead to a large number of more and more complicated equations. But what if higher order moments, say triples, can be approximated by lower-order ones, such as singles and pairs? This process is referred to as `closure' and it is fundamental to derive mean-field models. In what follows we start from the entire graph and aim to re-derive some of the existing closures and propose new ones, but with special focus on staring at the graph-level and combining this with information at the local or node-level.


\section{Triple counts from neighbourhood distribution}\label{sec:prob}

Deriving evolution equation for some of the summary statistics from the exact model will require to find reasonable approximations to expressions of the form
\[
[A_{i_1}A_{i_2}\ldots A_{i_k}], \text{ where } k \in \N \text{ and } i_j \in \{ 1, \ldots, m\} \text{ for all } j.
\]
All quantities depend on time, so we assume it fixed and do not encumber notation with an extra $t$ index. Also keep $k$ fixed for the moment. Define the set of paths of size $k$ as
\[
\Pi_k = \{(v_1,  v_2 , \ldots, v_k): (v_1,  v_2 , \ldots, v_k) \in \mathcal V^k, v_i \neq v_j, (v_i, v_{i+1}) \in \mathcal E \text{ for } 1 \le i \le k-1 \}.
\]
Admissible and non-admissible paths are shown in Figure~\ref{fig:admissible_k_tuple}.
Irrespective of the Markov chain which happens on the network $\mathcal G$, the process will induce a probability mass function on the $k$-tuples of $\Pi_k$. 
At a given time every element of the state space has a well-defined probability mass which is given by the Markov chain itself, and obviously depends also on the initial state of the system. Then the probabilities of the states determine the expected value of the elements of $\Pi_k$.

\begin{figure}[h]
\begin{center}
\begin{tikzpicture}[scale=0.95]
\draw[magenta, line width = 1.5pt] (0,0)--(1,0)--(1,1)--(0,1)--(0,0)--(-1,-1)--(-1,2);
\draw[magenta, line width = 1.5pt](1,1)--(1,1.5)--(1,1)--(1.5,1.5);
\draw[magenta, line width = 1.5pt](-1,2)--(-1.5,2.5)--(-1,2)--(-0.5,2.5);
\draw[magenta, line width = 1.5pt](-1,-1)--(-1.5,-1.5);
\draw[line width = 3pt](-1,2)node[below left]{$v_1$}--(0,0)node[left]{$v_2$}--(1,0)node[below left]{$v_3$}--(2, -0.9)node[below left]{$v_4$}--(1,1)node[below left]{$v_5$};
\draw[magenta, line width = 1.5pt](2,-1)--(2.5,-1.5);
\draw[fill=magenta](0,0)circle(1.5mm);
\draw[fill=magenta](0,1)circle(1.5mm);
\draw[fill=magenta](1,0)circle(1.5mm);
\draw[fill=magenta](-1,-1)circle(1.5mm);
\draw[fill=magenta](1,1)circle(1.5mm);
\draw[fill=magenta](2,-1)circle(1.5mm);
\draw[fill=magenta](-1,2)circle(1.5mm);
\draw[magenta, line width = 1.5pt] (6,0)--(7,0)--(7,1)--(6,1)--(6,0)--(5,-1)--(5,2);
\draw[magenta, line width = 1.5pt](7,1)--(7,1.5)--(7,1)--(7.5,1.5);
\draw[magenta, line width = 1.5pt](5,2)--(4.5,2.5)--(5,2)--(5.5,2.5);
\draw[magenta, line width = 1.5pt](5,-1)--(4.5,-1.5);
\draw[magenta, line width = 1.5pt](8,-1)--(8.5,-1.5);
\draw[line width = 1.5pt, magenta](5,2)--(6,0)--(7,0)--(8, -1)--(7,1);
\draw[line width = 3pt](5,2)node[below left]{$v_1$}--(6,0)node[left]{$v_2$}--(7,0)node[below left]{$v_3$}--(6, 0)node[below]{$v_4$}--(5,-1)node[below right]{$v_5$};
\draw[fill=magenta](6,0)circle(1.5mm);
\draw[fill=magenta](6,1)circle(1.5mm);
\draw[fill=magenta](7,0)circle(1.5mm);
\draw[fill=magenta](5,-1)circle(1.5mm);
\draw[fill=magenta](7,1)circle(1.5mm);
\draw[fill=magenta](8,-1)circle(1.5mm);
\draw[fill=magenta](5,2)circle(1.5mm);
\end{tikzpicture}
\end{center}
\caption{Admissible (left) and non-admissible (right) ordered network paths going through 5 nodes. The left path visits each of its 5 nodes exactly once. The right one is inadmissible in our calculations, as the second and fourth node on it coincide.}
\label{fig:admissible_k_tuple}
\end{figure}
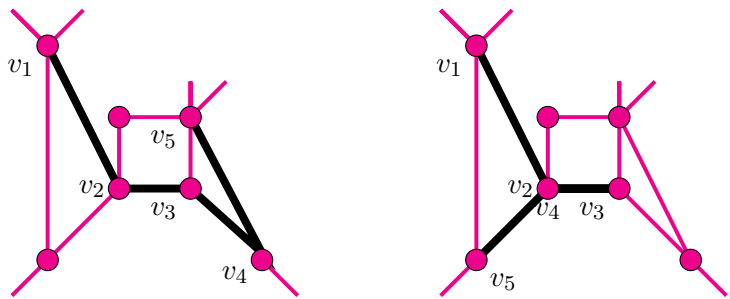


An expression of the expected value of the $k-$tuple can be computed using the law of total probability, by conditioning on the type of site $v_j$: 
\allowdisplaybreaks
\begin{align}
\E(A_{i_1}&A_{i_2}\ldots A_{i_k})
= \E\Big(\sum_{(v_1, \ldots, v_k) \in \Pi_k } \mathbbm1\{ v_1 = A_{i_1}, v_2 = A_{i_2}, \ldots, v_k=A_{i_k} \} \Big) \notag \\
&= \sum_{(v_1, \ldots, v_k) \in \Pi_k } \P \{ v_1 = A_{i_1}, v_2 = A_{i_2}, \ldots, v_k=A_{i_k} \} \label{eq:linky}\\
&= \sum_{v \in \mathcal G}   \P\{ v = A_{i_j}\} \!\!\!\! \sum_{ (v_1, \ldots, v_j = v, \ldots, v_k) \in \Pi_k}  \!\!\!\! P \{ v_1 = A_{i_1},  \ldots, v_k=A_{i_k} | v_j = v = A_{i_j}\}.\label{eq:condmass}
\end{align}

This expression is based on the fact that we are computing the expectation of a specific arrangement of states on a non-self-intersecting  paths. Different expressions involving conditional quantities for expected values of more complicated sets of $k$-tuples will hold in the case of non-admissible paths; the applications below are for $k= 3$ involving open triples and therefore we do not need to consider more complicated paths. 

For the theoretical part of this article, we consider a multi-type epidemic on $\mathcal G$ and assume a node can be in any of $m$ different states, say $A_1, \ldots, A_m$. We will first compute theoretical expressions for $[A_{i} A A_{j}]$ using \eqref{eq:condmass}.

For a given node $v$ denote by
\be
N_i^{(v)} = \text{card}\{ \text{neighbours of $v$ that take the value $A_i$} \}, \quad A_i \in \{ A_1, \ldots, A_m \}.
\ee
Keep in mind that $ \displaystyle \sum_{i=1}^m N_i^{(v)} = \deg v$.

\begin{proposition}\label{prop:triples}
Consider a multi-type epidemic on $\mathcal G$ and assume a node can be on any of $m$ different states $A_1, \ldots, A_m$. Then, the following two formulas hold:

\noindent For $A_i \neq A_j$
\be \label{eq:more-ex}
[A_iAA_j]  =  \sum_{v \in \mathcal G}    \P\{ v = A\} E \Big[ N_i^{(v)}N_j^{(v)} \Big| v= A\Big],
\ee
for $A_i = A_j$
\be \label{eq:more-ex2}
[A_iAA_i] = \sum_{v \in \mathcal G}    \P\{ v = A\} E \Big[ N_i^{(v)}(N_i^{(v)} -1) \Big| v= A\Big].
\ee
\end{proposition}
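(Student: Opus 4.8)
The plan is to specialise the general identity \eqref{eq:condmass} to $k=3$, taking the conditioning site to be the \emph{middle} one, $v_2$, whose prescribed state is $A$. This writes $[A_iAA_j] = \sum_{v\in\mathcal G}\P\{v=A\}\,S_{ij}(v)$, where $S_{ij}(v)$ is the sum, over all admissible paths $(v_1,v_2,v_3)\in\Pi_3$ with $v_2=v$, of the conditional probability $\P\{v_1=A_i,\,v_3=A_j\mid v=A\}$. The first step is to identify the index set of that inner sum: by the definition of $\Pi_3$, fixing $v_2=v$ forces $v_1$ and $v_3$ to be neighbours of $v$ (so automatically $v_1\neq v$ and $v_3\neq v$, there being no self-loops) and to satisfy $v_1\neq v_3$. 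Writing $w\sim v$ for ``$w$ is a neighbour of $v$'', this gives $S_{ij}(v)=\sum \P\{w_1=A_i,\,w_2=A_j\mid v=A\}$, the sum ranging over all \emph{ordered pairs of distinct} neighbours $(w_1,w_2)$ of $v$.

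Next I would rewrite the neighbourhood counts as sums of indicators, $N_i^{(v)}=\sum_{w\sim v}\mathbbm1\{w=A_i\}$ and likewise for $N_j^{(v)}$, and expand the product into a double sum $N_i^{(v)}N_j^{(v)}=\sum_{w_1\sim v}\sum_{w_2\sim v}\mathbbm1\{w_1=A_i\}\mathbbm1\{w_2=A_j\}$. The only point needing care --- and the point that produces the two different formulas --- is the diagonal $w_1=w_2$. When $A_i\neq A_j$ a single node cannot be in both states, so every diagonal term vanishes and $N_i^{(v)}N_j^{(v)}$ already equals the sum over ordered distinct pairs. When $A_i=A_j$ the diagonal contributes $\sum_{w\sim v}\mathbbm1\{w=A_i\}=N_i^{(v)}$, so one subtracts it and obtains $N_i^{(v)}(N_i^{(v)}-1)$ as precisely the sum over ordered distinct pairs.

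Finally I would take conditional expectations given $v=A$ and use linearity, which turns both right-hand sides into $\sum\P\{w_1=A_i,\,w_2=A_j\mid v=A\}$ over ordered distinct neighbour pairs, i.e.\ exactly $S_{ij}(v)$ when $A_i\neq A_j$ and $S_{ii}(v)$ when $A_i=A_j$; substituting back into $[A_iAA_j]=\sum_v\P\{v=A\}S_{ij}(v)$ yields \eqref{eq:more-ex} and \eqref{eq:more-ex2}. There is no genuine obstacle here: the argument is purely combinatorial bookkeeping, and the one place where a slip is possible is the diagonal accounting just described, which is why I would spell that step out explicitly rather than absorb it into ``expanding the square''.
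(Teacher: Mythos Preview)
Your proposal is correct and follows essentially the same route as the paper: specialise \eqref{eq:condmass} to $k=3$ with the middle vertex as the conditioning site, rewrite the inner sum as a conditional expectation of a double sum of indicators over neighbours of $v$, and then handle the diagonal $w_1=w_2$ to distinguish the two cases. If anything, you are slightly more explicit than the paper about \emph{why} the diagonal terms vanish when $A_i\neq A_j$; the paper simply passes silently from the restricted sum $v_3\neq v_1$ to the unrestricted double sum in that case.
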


\begin{proof}[Proof of proposition \ref{prop:triples}]
We first assume that the types $A_i \neq A_j$, and we start computing from equation  \eqref{eq:condmass}:
\allowdisplaybreaks
\begin{align*}
[A_iAA_j] = \E(A_iAA_j)
&= \sum_{v \in \mathcal G}   \P\{ v = A\}  \sum_{ (v_1, v, v_3) \in \Pi_3} P \{ v_1 = A_{i},  v_3=A_{j} | v = A\} \text{ by \eqref{eq:condmass}}\notag \\
&= \sum_{v \in \mathcal G}  \P\{ v = A\}  E \Big[ \sum_{v_1: v_1\sim v} \sum_{v_3: v_3 \sim v} \mathbbm1\{ v_1 =A_i, v_3=A_j\} \Big| v= A\Big]\notag \\
&= \sum_{v \in \mathcal G}    \P\{ v = A\} E \Big[ N_i^{(v)}N_j^{(v)} \Big| v= A\Big].
\end{align*}
Similarly, we repeat the computation for when $A_i = A_j$. The change is during the second equality above, when we compute the double sum.
\begin{align*}
[A_iAA_i] = \E(A_iAA_i)
&= \sum_{v \in \mathcal G}   \P\{ v = A\}  \sum_{ (v_1, v, v_3) \in \Pi_3} P \{ v_1 = A_{i},  v_3=A_{i} | v = A\} \text{ by \eqref{eq:condmass}}\notag \\
&= \sum_{v \in \mathcal G}  \P\{ v = A\}  E \Big[ \sum_{v_1: v_1\sim v} \sum_{v_3: v_3 \sim v, v_3 \neq v_1} \mathbbm1\{ v_1 =A_i, v_3=A_i\} \Big| v= A\Big]\notag \\
&= \sum_{v \in \mathcal G}  \P\{ v = A\}  E \Big[ \sum_{v_1: v_1\sim v}  \mathbbm1\{ v_1 =A_i\} \sum_{v_3: v_3 \sim v, v_3 \neq v_1}\mathbbm1\{ v_3=A_i\} \Big| v= A\Big]\notag \\
&= \sum_{v \in \mathcal G}    \P\{ v = A\} E \Big[ N_i^{(v)}(N_i^{(v)} -1) \Big| v= A\Big].
\end{align*}
\end{proof}

In fact fixing an arbitrary network and disease parameters allows us to extract the expected values of triples that drive the epidemic (e.g. for an SIS epidemic these are $[SIS](t)$ and $[ISI](t)$) directly from sampling multiple realisations of the stochastic epidemic. Sampling across a discrete time set can then give the empirical conditional distribution of
 \[(N_1^{(v)} , N_2^{(v)}, \ldots , N_m^{(v)} ) | v = A \]
and also a way to estimate $\P\{v = A\}$ on these times, which can then be used 
in Proposition~\ref{prop:triples}. Networks with enough symmetries (i.e. if nodes are exchangeable) and a large number of nodes can give a good approximation of the empirical distribution with just one simulation.

These expected triple counts in time can then be fed into the pairwise model \eqref{eq:PW_SIS_S}-\eqref{eq:PW_SIS_II}  and hence its numerical integration is possible without a closure. In fact, this leads to a very accurate mean-field model as shown in Figure~\ref{fig:triplecount_into_pairwise} for three different network models. Of course, the applicability of this method is limited as it relies on output from the simulation. Nevertheless, it shows that being able to accurately approximate triples is key to derive accurate mean-field models.

\begin{figure}[h!]
\centering
\includegraphics[width=0.325\textwidth]{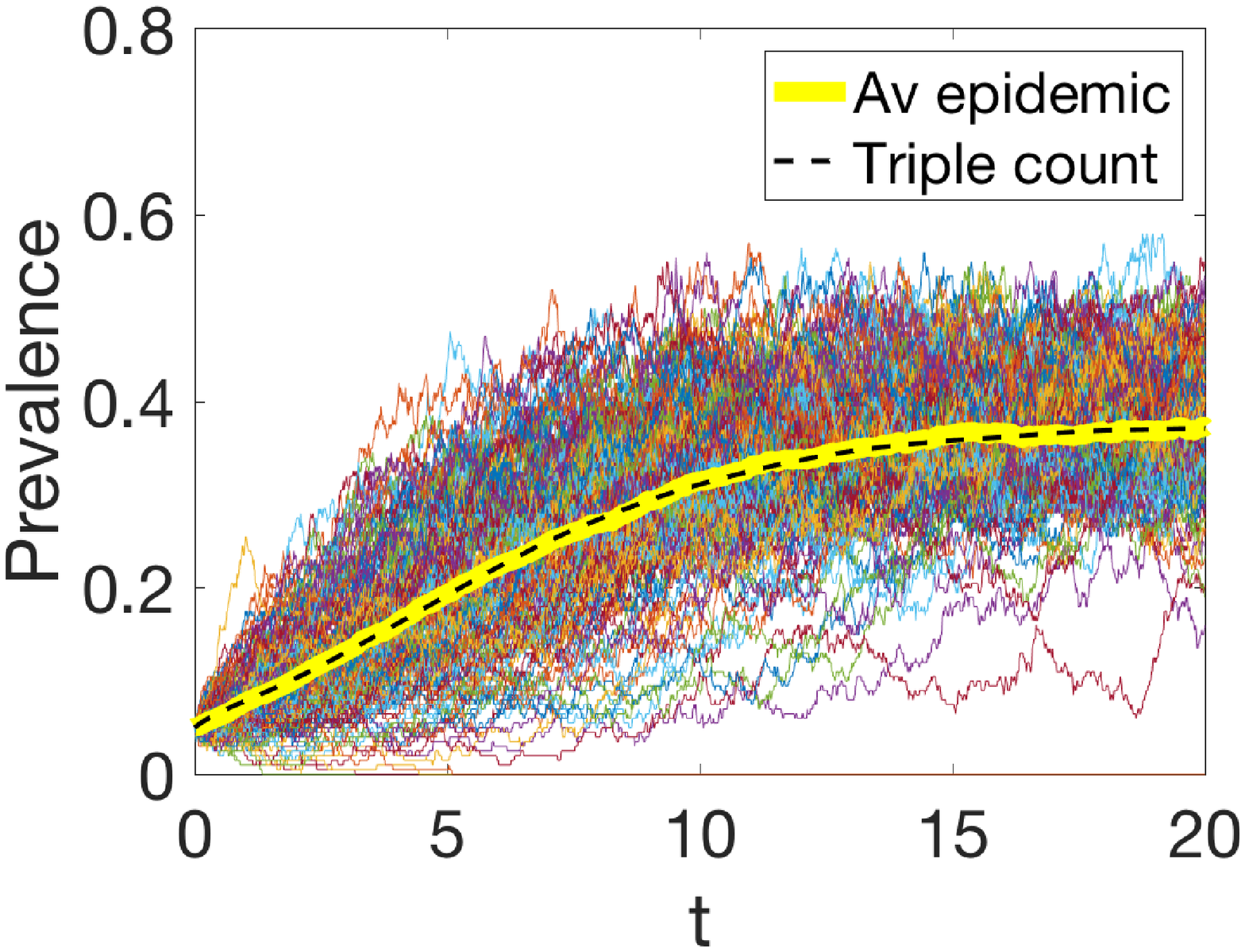}
\includegraphics[width=0.325\textwidth]{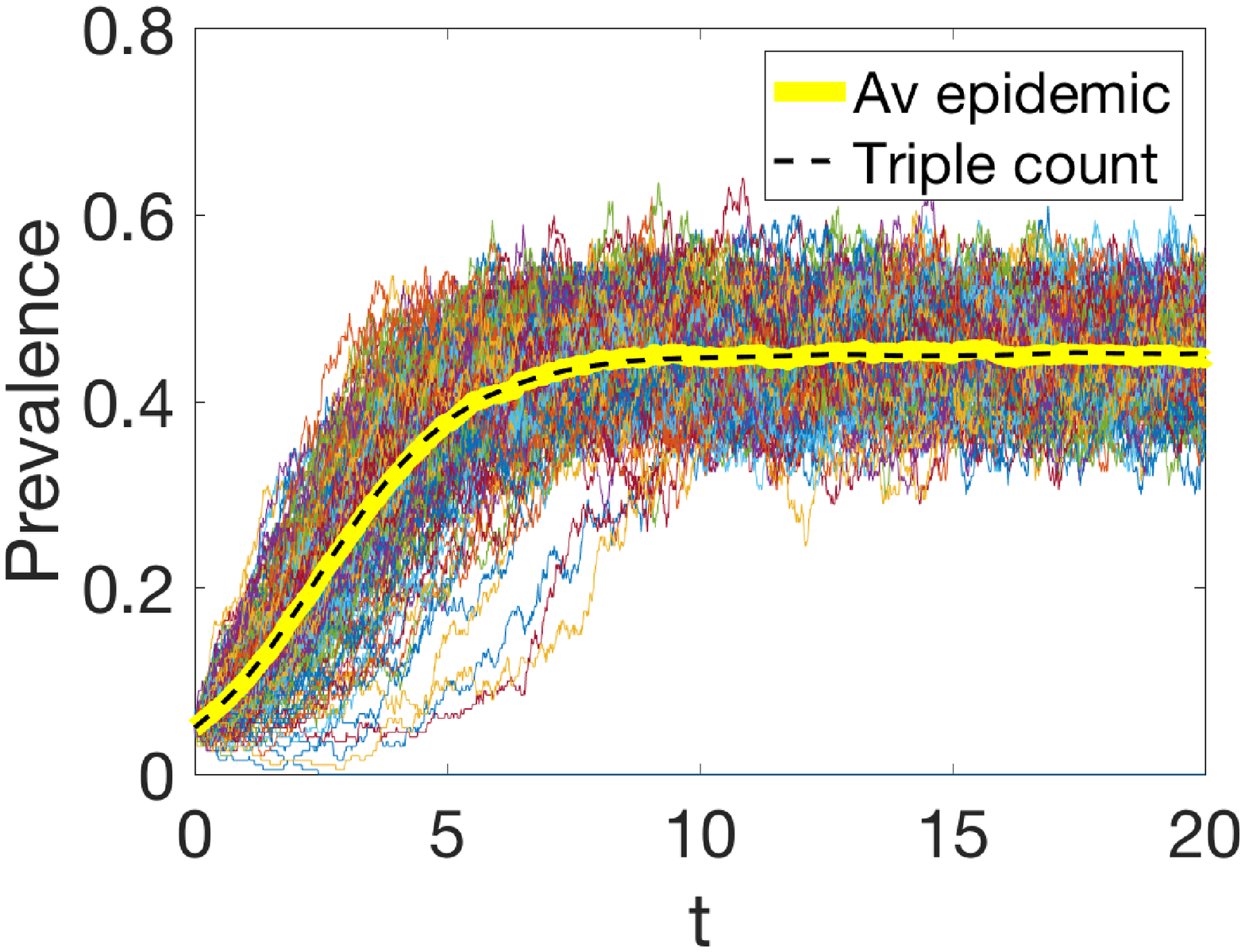}
\includegraphics[width=0.325\textwidth]{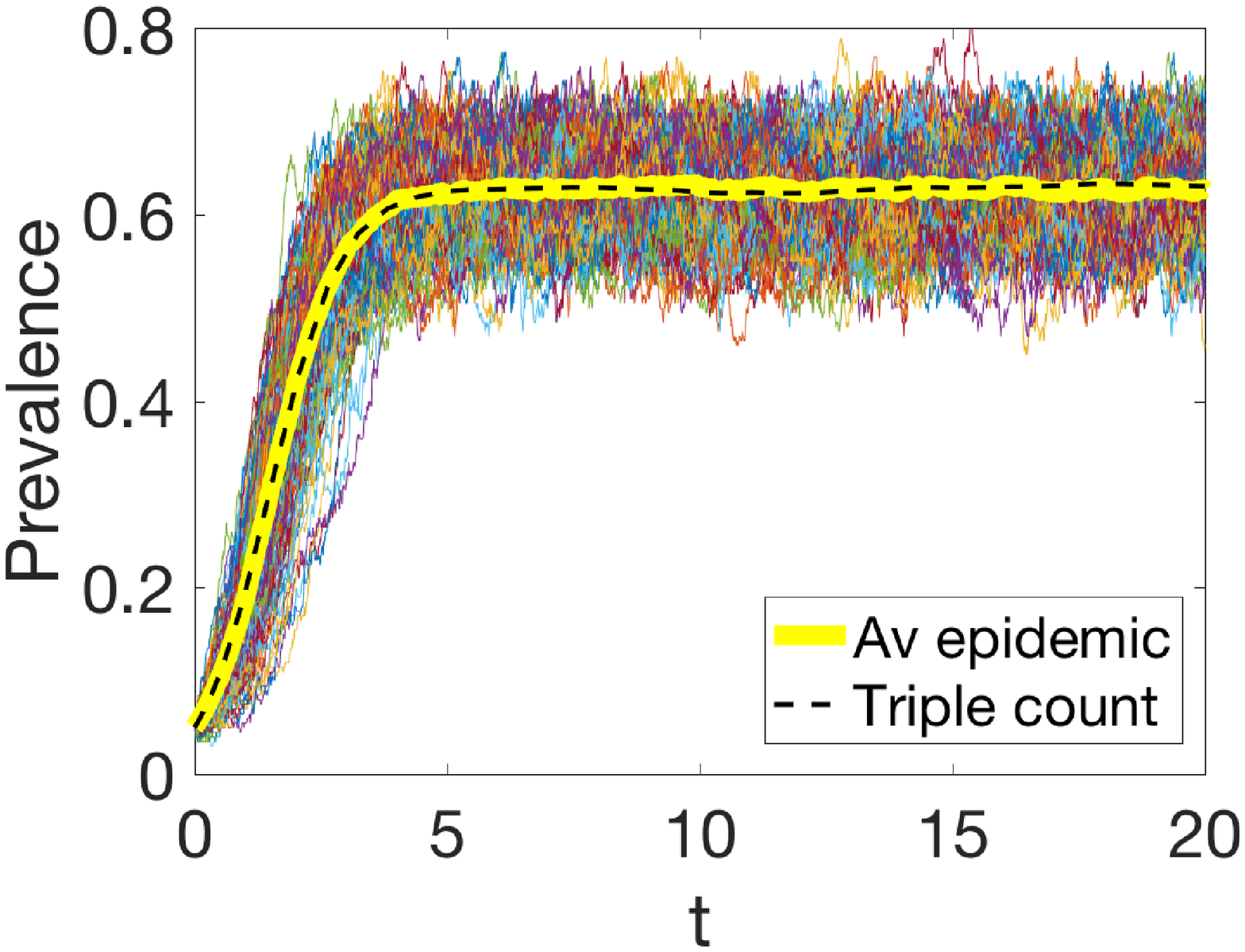}
   \centering
%
    \caption{Individual realisations of the stochastic SIS model (multicoloured noisy lines) on regular (left), Erd\H{o}s-R\'enyi (middle) and bi-modal (right) networks. The average
    of all realisations is also shown (thick continuous line) along with the solution of the pairwise model \eqref{eq:PW_SIS_S}-\eqref{eq:PW_SIS_II} (thin dashed line) with the triples $[SSI]$ and $[ISI]$ taken as the expected value computed across all 200 individual epidemic simulations. In all figures $N=200$, $\tau=0.5$ and $\gamma=1$. Regular and Erd\H{o}s-R\'enyi networks have $\langle k\rangle=4$ while the bi-modal network has half of the nodes with degree 4 and the the other half with degree 8. Each epidemic starts with 10 infected nodes chosen uniformly at random and only epidemics that reach 20 infected nodes count towards the average with time reset to $t=0$ when the state with twenty infected nodes is first reached. }
    \label{fig:triplecount_into_pairwise}
\end{figure}


\remark{Note that the formulas in Proposition \ref{prop:triples} have the sum over all vertices of $\mathcal{G}$. This is just the convenient way to write the formula for the sequel; the terms that need to be estimated
are then the $\P\{ v = A\}$ and $E \Big[ N_i^{(v)}N_j^{(v)} \Big| v= A\Big]$. We explain in the next sections how these quantities can be approximated.

The conditional expectation indicates that the important quantities clearly depend on the state of $v$ being $A$ and, per realisation, it is those sites that guide the value of the triple. Here is a different way that highlights this, assuming $A_i \neq A_j$, we get:
\begin{align*}
\#\{ A_iAA_j \}& = \sum_{v \in \mathcal G} \sum_{w: w \sim v}  \sum_{z: z \sim v} \mathbbm1\{v =A\}\mathbbm1\{w =A_i\}\mathbbm1\{z =A_j\}\\
&=  \sum_{v \in \mathcal G} \mathbbm1\{v =A\} N_i^{(v)}N_j^{(v)}=  \sum_{v: v =A}N_i^{(v)}N_j^{(v)}.
\end{align*}
This is an exact count, and the sum goes through the (random) set of nodes in state A. What we need for the mean-field model in this article is the expected value. Taking expectations cannot simplify the right-hand side any further as the sum is over a random index  and as such it does not commute with the expectation operation. This is the reason why we opt to use the formulas in Proposition \ref{prop:triples} rather than this more intuitive one.
}

\section{Closures based on the distribution of states of the neighbours}\label{sec:closures}
Proposition \ref{prop:triples} can be useful in certain situations, as can be seen in the following cases. In all that follow, a theoretical ansatz is made on the distribution of $(N_1^{(v)} , N_2^{(v)}, \ldots , N_m^{(v)} ) | v = A$ that can be used to further develop the formulas in Proposition \ref{prop:triples}. As a consequence, we can recover directly several closure formulas that are used in the literature, and showcase some further examples.

\subsection{Closures for the multinomial distribution of states of the neighbours} \label{subsec:multi}
Consider a a multi-type epidemic on a network, with possible states $A_1, \ldots, A_m$. For the {\bf Multinomial link model} we assume that for any given vertex $v$, the distribution of the $A_i$' s of its neighbouring vertices  depends  on the state of $v$, (say $v = A$) and conditional on the state of $v$, each neighbouring vertex takes a value $A_i$ from $\{ A_1, \ldots, A_m \}$ with probability $p^{(v, A)}_i$, independently of everything else.  The only condition required at this point is that
\be \label{eq:cond1}
\sum_{i=1}^m p^{(v, A)}_i =1.
\ee
This description is equivalent to the following 
\begin{assumption}\label{ass:m} Conditional on $v = A$, the conditional distribution
\be
(N_1^{(v)} , N_2^{(v)}, \ldots , N_m^{(v)} ) | v = A \sim {\rm Mult}(\deg v, m; p^{(v, A)}_1 , \ldots, p^{(v, A)}_m), \label{eq:multi}
\ee
i.e. the distribution of types around a node of type $A$ is multinomial with $m$ possible outcomes of $\text{deg}\, v$ independent experiments.
\end{assumption}

\begin{remark}
When $\deg v$ is constant (and not random), the marginal distributions of $N_i^{(v)}$ are binomial and they were considered before. They were used to derive
a triple closure for simple epidemics (see \cite{rand1999correlation}, chapter 4, p.104). In that notation \cite{rand1999correlation}, this distribution is denoted by $Q_v(i | j)$. Correlations were introduced to take into account the fact that the degree creates constraints, turning it into the multinomial model. The authors also discuss Poisson closures as viable approximations and we also do that at the end of the section, re-deriving their formulas.
\end{remark}

As we mentioned, Assumption \ref{ass:m} implies that the (conditional) marginal distributions of the coordinates are binomial and we can immediately
obtain that
\[
E(N_i^{(v)}| v = A) = p_{i}^{(v, A)}\deg v, \text{ for all } i \in \{ 1, \ldots, m \}.
\]
Moreover, a direct computation using \eqref{eq:multi} gives 
\[
E \Big[ N_i^{(v)}N_j^{(v)} \Big| v= A\Big] = p_{i}^{(v, A)}p_{j}^{(v,A)}\deg v (\deg v - 1).
\]

The conditional  distribution is multinomial, therefore the conditional expectation is the expectation of products of binomially distributed coordinates of the multinomial distribution \eqref{eq:multi}.
A different way to write $E[ N_i^{(v)}N_j^{(v)}| v= A] = \text{Cov}_{v = A}(N_i^{(v)}, N_j^{(v)}) + E[ N_i^{(v)}| v= A]\, E[N_j^{(v)} | v= A]$, which corresponds to the approach in  \cite{rand1999correlation}. 
Therefore when we substitute in \eqref{eq:more-ex} for the case where $A_i \neq A_j$ we get that
\allowdisplaybreaks
\begin{align}
[A_iAA_j] &=  \sum_{v \in \mathcal G}    \P\{ v = A\} p_{i}^{(v, A)}p_{j}^{(v,A)}\deg v (\deg v - 1). \label{eq:pre-approx}
\end{align}

The final step is to derive as accurate as possible approximations for the probabilities/quantities in \eqref{eq:pre-approx}.

\begin{remark} Consider SIS epidemics on a regular network of degree $n$. There are only two types, so the distribution of types around (say) S nodes is completely characterised by the number (and therefore the distribution) of type I around S nodes.  If the multinomial model assumption is accurate, then that distribution of I's should be Binomial with $n$ trials. Indeed this can be numerically verified. Figure~\ref{fig:REG_Poisson_Binom_neighbours} shows histograms of the marginal distributions of the number of I nodes around susceptible nodes for an SIS epidemic simulated on a regular network. The probability of the binomial distribution is estimated using formula \eqref{eq:pi}, and it fits well to the simulated data. This particular fit suggests that Assumption \ref{ass:m} is not unrealistic in many situations.
\end{remark}

\begin{figure}[h]
\centering
\includegraphics[height=0.6\textwidth]{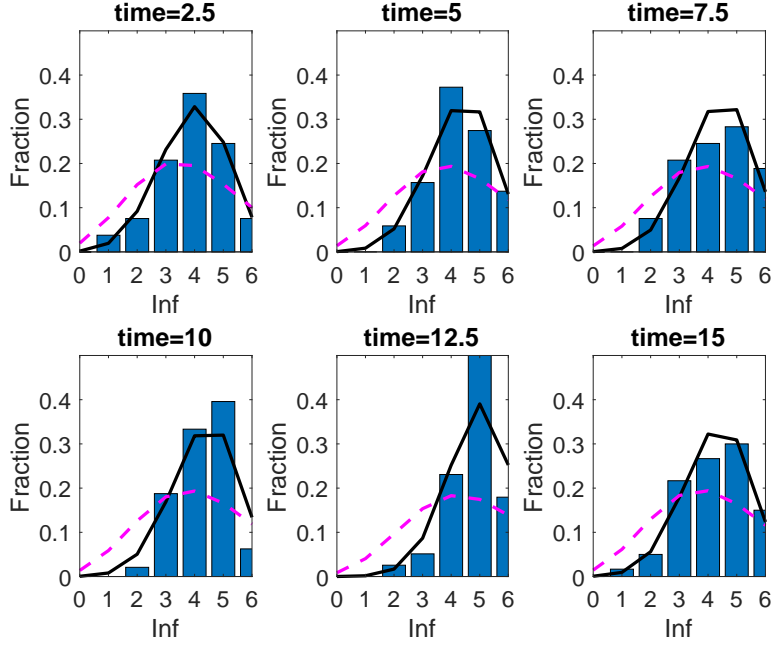}
\centering
\caption{The distribution of infected nodes around susceptible nodes at different points in the epidemic on a Regular network with $N=200$ nodes with $\langle k \rangle=6$, $\gamma=1$ and $\tau=0.75$. This is base on one single realisation of the epidemic on one single realisation of the network. The continuous  and dashed lines stand for the Binomial (with $\langle k \rangle=6$ trials and probability of success $[SI]/(\langle k \rangle [S])$ ) and Poisson distributions respectively (with rate $[SI]/[S]$), where the pair and single are taken from simulation. Notice how the binomial mass function is a better fit for this data than the Poisson one.}
\label{fig:REG_Poisson_Binom_neighbours}
\end{figure}

At this point we carefully explain the remaining assumptions that lead to the closures we are concerned with.

\begin{assumption} \label{ass:1}
\[
  \P\{ v = A\} \sim \frac{[A]}{N},\quad  { \rm for \,\,\, all \quad } v \in \mathcal G.
\]
\end{assumption}
We emphasise that this is not an assumption that applies to all networks, as it suggests that at any given time $t$, that any node has the same probability of being (say) infected; intuitively this fails when in networks with high degree heterogeneity, as well-connected or high-degree nodes have higher probability of being in a certain state. This is a reasonable assumption on networks with low degree heterogeneity.


The second assumption is about the values of $p_i^{(v,A)}$ in the conditional multinomial distribution.
\begin{assumption}\label{ass:2} 
We assume that
\[
p_i^{(v,A)} \equiv p_i^{(A)}.
\]
In other words,  the parameters of the multinomial distribution do not depend on $v$ in any way, but only on the state  $v$ is in.
\end{assumption}

Under Assumptions \ref{ass:m}, \ref{ass:1} and \ref{ass:2} we can first derive approximations for $p_{i}^{(A)}$:
\begin{align*}
[A_iA] &=\E(A_iA) = \E \Big( \sum_{v \in \mathcal G} \sum_{w: w\sim v} \mathbbm1\{ w=A_i, v = A \}\Big)\\
&=  \sum_{v \in \mathcal G} \P\{ v = A\} E(N_i^{(v)}| v = A)\approx \frac{[A]}{N} \sum_{v \in \mathcal G} \text{deg}\, v \, p_{i}^{(A)}\\
&= [A] \, p_{i}^{(A)} \langle\text{deg} \, v \rangle.
\end{align*}
Above we defined $ \langle\text{deg} \, v \rangle$ as the average degree of the network. The above computation implies
\be
\label{eq:pi} p_{i}^{(A)} \approx \frac{[A_iA]}{\langle \text{deg}\, v\rangle [A]}, \text{ for all } i \in\{ 1,2, \ldots, m\}.
\ee

We can now conclude computation \eqref{eq:pre-approx} under these approximations:
\begin{align*}
[A_iAA_j] &\approx \sum_{v \in \mathcal G} \frac{(\text{deg} v -1) \text{deg} v}{\langle \text{deg} v \rangle^2}\frac{1}{N}\frac{[A_iA][AA_j]}{[A]} \notag \\
&= \frac{[A_iA][AA_j]}{[A] \langle\text{deg} \, v \rangle^2 }\Big(\frac{1}{N}\sum_{v \in \mathcal G}(\text{deg} v -1) \text{deg} v\Big)\notag\\
&= \frac{[A_iA][AA_j]}{[A] \langle\text{deg} \, v \rangle^2 } \langle \deg v(\deg v -1) \rangle.
\end{align*}

Now in the case where $A_i = A_j$ we use \eqref{eq:more-ex2} and have
\begin{align*}
[A_iAA_i]&=\sum_{v \in \mathcal G}\P\{ v = A\} E \Big[(N_i^{(v)})^2 - N_i^{(v)}\Big| v= A\Big] \\
&=\sum_{v \in \mathcal G}\P\{ v = A\} \Big(\text{Var}(N_i^{(v)}| v=A) + (\E N_i^{(v)}|v=A)^2 - \E(N_i^{(v)}|v=A)\Big)\\
&= \sum_{v \in \mathcal G}\P\{ v = A\} ( p_i^{(A}(1-p_i^{(A)})\deg v +( p_i^{(A)} \deg v )^2 - p_i^{(A)} \deg v )\\
&= \sum_{v \in \mathcal G}\P\{ v = A\}(p_i^{(A)})^2\deg v( \deg v  -1).
\end{align*}
Substitute in the approximations for $\P\{ v = A\}$ and $p_i^{(A)}$ to obtain
$$
[A_iAA_i] \approx \frac{[A]}{[N]}  \frac{[A_iA]^2}{\langle \deg v\rangle^2 [A]^2}  \sum_{v \in \mathcal G} \deg v( \deg v  -1) = \frac{[A_iA]^2}{[A]\langle \deg v\rangle^2} \langle \deg v (\deg v-1) \rangle.
$$
Thus starting from Proposition \ref{prop:triples} and using the above assumptions we have proved the following.

\begin{theorem}\label{thm:closure}
Under Assumptions \ref{ass:m}, \ref{ass:1} and \ref{ass:2} the expected value of triples can be given as
\be
[A_iAA_j] = \frac{[A_iA][AA_j]}{[A] \langle\text{deg} \, v \rangle^2 } \langle \deg v(\deg v -1) \rangle \label{eq:gen-mul}
\ee
and
\be \label{eq:ii}
[A_iAA_i] = \frac{[A_iA]^2}{[A]\langle \deg v\rangle^2} \langle \deg v (\deg v-1) \rangle.
\ee
\end{theorem}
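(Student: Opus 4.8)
The plan is to assemble the theorem by tracking the two cases ($A_i\neq A_j$ and $A_i=A_j$) in parallel, starting from the exact formulas in Proposition~\ref{prop:triples} and successively invoking the three assumptions. First I would record the consequence of Assumption~\ref{ass:m}: since the conditional law of $(N_1^{(v)},\ldots,N_m^{(v)})$ given $v=A$ is multinomial with $\deg v$ trials and parameters $p_i^{(v,A)}$, the marginals are binomial and hence $\E[N_i^{(v)}\mid v=A]=p_i^{(v,A)}\deg v$, while the product moment is $\E[N_i^{(v)}N_j^{(v)}\mid v=A]=p_i^{(v,A)}p_j^{(v,A)}\deg v(\deg v-1)$ for $i\neq j$, and $\E[N_i^{(v)}(N_i^{(v)}-1)\mid v=A]=(p_i^{(v,A)})^2\deg v(\deg v-1)$. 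These can either be quoted as a standard multinomial computation or derived quickly via $\Var(N_i^{(v)}\mid v=A)=p_i^{(v,A)}(1-p_i^{(v,A)})\deg v$ together with $\E[X(X-1)]=\Var X+(\E X)^2-\E X$; substituting this into \eqref{eq:more-ex} and \eqref{eq:more-ex2} yields \eqref{eq:pre-approx} and its diagonal analogue, both still exact given only Assumption~\ref{ass:m}.

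Next I would bring in Assumptions~\ref{ass:1} and~\ref{ass:2}, which collapse the $v$-dependence. Assumption~\ref{ass:2} replaces $p_i^{(v,A)}$ by a single $p_i^{(A)}$, and Assumption~\ref{ass:1} replaces $\P\{v=A\}$ by $[A]/N$. The key intermediate lemma-within-the-proof is the identification of $p_i^{(A)}$ in terms of observable pair and singleton counts: writing the edge count $[A_iA]=\E\sum_{v}\sum_{w\sim v}\mathbbm1\{w=A_i,v=A\}=\sum_{v}\P\{v=A\}\,\E[N_i^{(v)}\mid v=A]$, substituting the two assumptions and $\E[N_i^{(v)}\mid v=A]=p_i^{(A)}\deg v$, and recognising $\tfrac1N\sum_v\deg v=\langle\deg v\rangle$, gives $[A_iA]\approx[A]\,p_i^{(A)}\langle\deg v\rangle$, hence \eqref{eq:pi}: $p_i^{(A)}\approx[A_iA]/(\langle\deg v\rangle[A])$.

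Then it is bookkeeping. For $A_i\neq A_j$, substitute $\P\{v=A\}\approx[A]/N$ and $p_i^{(A)}p_j^{(A)}\approx[A_iA][AA_j]/(\langle\deg v\rangle^2[A]^2)$ into \eqref{eq:pre-approx}, pull the $v$-independent factors out of the sum, and use $\tfrac1N\sum_v\deg v(\deg v-1)=\langle\deg v(\deg v-1)\rangle$ to land on \eqref{eq:gen-mul}. For $A_i=A_j$ the same substitutions with $(p_i^{(A)})^2$ in place of $p_i^{(A)}p_j^{(A)}$ give \eqref{eq:ii}; one should note that the factor $\deg v(\deg v-1)$ arising from $\E[N_i^{(v)}(N_i^{(v)}-1)\mid v=A]$ is precisely the same as in the off-diagonal case, so the two formulas differ only by $[AA_j]\leftrightarrow[A_iA]$.

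There is no genuine obstacle here — every step is either a standard moment identity for the multinomial or a direct substitution — so the main thing to get right is organisational: being explicit that \eqref{eq:more-ex} and \eqref{eq:more-ex2} are used as black boxes from Proposition~\ref{prop:triples}, being careful that the error incurred is only through Assumptions~\ref{ass:1}–\ref{ass:2} (so the ``$=$'' in the theorem statement is really ``$\approx$'' modulo those ansätze, as the surrounding text acknowledges), and keeping the $A_i=A_j$ case visibly parallel so that the $-1$ in $N_i^{(v)}(N_i^{(v)}-1)$ is seen to reproduce the same $\langle\deg v(\deg v-1)\rangle$ factor rather than something new. If one wanted to be fully self-contained, the only non-trivial micro-step to justify is the multinomial product moment $\E[N_iN_j]=np_ip_j(n-1)/n\cdot n = n(n-1)p_ip_j$, most cleanly obtained by writing $N_i=\sum_{\ell}\mathbbm1\{X_\ell=A_i\}$ over the $n=\deg v$ independent neighbour-draws and expanding, the cross terms ($\ell\neq\ell'$) contributing $n(n-1)p_ip_j$ and the diagonal contributing $0$ since a single draw cannot be both $A_i$ and $A_j$.
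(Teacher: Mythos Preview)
Your proposal is correct and matches the paper's own derivation essentially step for step: both start from Proposition~\ref{prop:triples}, invoke the multinomial moment identities under Assumption~\ref{ass:m} to reach \eqref{eq:pre-approx} and its diagonal analogue, then use Assumptions~\ref{ass:1} and~\ref{ass:2} to derive \eqref{eq:pi} from the pair count $[A_iA]$ and substitute back. The only cosmetic difference is that the paper treats the $A_i=A_j$ case after the off-diagonal one (computing $\E[(N_i^{(v)})^2-N_i^{(v)}\mid v=A]$ via the binomial variance exactly as you suggest), whereas you run the two cases in parallel; your added remark on deriving the multinomial cross-moment by expanding indicators is a helpful micro-justification that the paper simply quotes.
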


Applying this theorem to SIS dynamics we obtain

\begin{corollary}\label{cor:SISclosure}  For SIS epidemics, closures \eqref{eq:gen-mul}, \eqref{eq:ii} become
\be
[SSI] \approx \frac{[SS][SI]}{[S] \langle \deg \, v \rangle^2 } \langle \deg v(\deg v -1) \rangle,
\label{eq:SSI_closure_degv}
\ee
and
\be
[ISI]  \approx \frac{[SI]^2}{[S] \langle \deg \, v \rangle^2 } \langle \deg v(\deg v -1) \rangle.
\label{eq:ISI_closure_degv}
\ee
\end{corollary}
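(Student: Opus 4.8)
\textit{Proof proposal.} The plan is simply to instantiate Theorem~\ref{thm:closure} in the two-state setting $m=2$ with $\{A_1,A_2\}=\{S,I\}$, since SIS dynamics is precisely a multi-type epidemic with these two states and Assumptions~\ref{ass:m}, \ref{ass:1} and \ref{ass:2} are exactly the hypotheses of that theorem. The whole argument is bookkeeping about which node in a triple notation such as $[SSI]$ or $[ISI]$ plays the role of the ``central'' node $A$ that is conditioned on in Proposition~\ref{prop:triples}, namely the middle node of the open triple.

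First I would treat $[SSI]$. Reading the triple $S$–$S$–$I$ from the middle outward, the central node has state $A=S$ and the two endpoints have states $A_i=S$ and $A_j=I$. Since $S\neq I$, the relevant identity is \eqref{eq:gen-mul}, which upon substituting $A=S$, $A_i=S$, $A_j=I$ gives
\[
[SSI]=\frac{[SS][SI]}{[S]\,\langle\deg v\rangle^2}\,\langle\deg v(\deg v-1)\rangle,
\]
which is \eqref{eq:SSI_closure_degv}. Next I would treat $[ISI]$: here the central node again has state $A=S$, but now both endpoints have the same state $A_i=A_j=I$, so the relevant identity is \eqref{eq:ii} rather than \eqref{eq:gen-mul}. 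Substituting $A=S$ and $A_i=I$ yields $[ISI]=[IS]^2/([S]\langle\deg v\rangle^2)\cdot\langle\deg v(\deg v-1)\rangle$, and since $\mathcal G$ is undirected and unweighted we have $[IS]=[SI]$, giving \eqref{eq:ISI_closure_degv}.

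There is essentially no obstacle here — the statement is a direct corollary, and the only thing one must be careful about is choosing the correct branch of Theorem~\ref{thm:closure} (the $A_i\neq A_j$ case for $[SSI]$ versus the $A_i=A_j$ case for $[ISI]$) and recording the symmetry $[SI]=[IS]$. If anything merits a remark, it is that \eqref{eq:SSI_closure_degv}--\eqref{eq:ISI_closure_degv} are written with ``$\approx$'' rather than ``$=$'' to stress that, although Theorem~\ref{thm:closure} is an exact consequence of its three assumptions, those assumptions (in particular Assumption~\ref{ass:1}) are approximations for the SIS process on a genuine network, so the resulting pairwise closure is itself an approximation.
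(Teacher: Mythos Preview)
Your proposal is correct and matches the paper's approach exactly: the paper simply states the corollary with the preamble ``Applying this theorem to SIS dynamics we obtain'' and gives no further proof. Your write-up is in fact more careful than the paper, since you make explicit which branch of Theorem~\ref{thm:closure} applies to each triple and invoke the symmetry $[IS]=[SI]$.
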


\begin{remark} The formulas in Corollary \ref{cor:SISclosure} are strikingly similar to the ones in article \cite{simon2015super}, which was obtained by decomposing triples based on the degree of the susceptible nodes. However, there is an a-priori  significant difference, as the degree moments here are over all nodes in the network, not just those in state S at time $t$. If the nodes in state S at any given time in the network is a statistically significant sample of nodes in the network and the nodes are exchangeable (e.g. low-degree heterogeneity), the moments here and the moments used in \cite{simon2015super} are very close to each other and the two closures will perform in a similar way.
\end{remark}

\begin{example}($n$-regular networks $\mathcal G$) If $\mathcal G$ is $n$-regular, i.e. all nodes have degree $n$, then $\deg v = n = \langle\deg v\rangle$. Assume for the moment that Assumption \ref{ass:2} does not necessarily hold. We may still proceed with the computation of $[A_iA]$, but now we would have
\[
[A_iA] \approx \frac{[A]}{N} \sum_{v \in \mathcal G} p_{i}^{(v, A)}\deg v,
\]
 and it can be pulled out of the sum, and after an elementary algebraic manipulation we have that
\[
\frac{1}{N} \sum_{v \in \mathcal G} p_{i}^{(v, A)} \approx \frac{[A_iA]}{n[A]}.
\]
Therefore the statistical average of $p_i^{(v, A)}$ (on the left hand side) is only a function of the state (right-hand side) which coincides with the expression in \eqref{eq:pi}.

For $\mathcal G$ $n$-regular, equation \eqref{eq:gen-mul} becomes the classical binomial closure 
\be
[A_iAA_j]\approx\frac{n-1}{n}\frac{[A_iA][AA_j]}{[A]},
\label{eq:closure_reg_n_takeaway_one} 
\ee
which can be found in \cite{rand1999correlation,keeling1999effects}.
\end{example}

Equation \eqref{eq:pi} is only an approximation. As it is stated, it violates the consistency condition \eqref{eq:cond1}. That can be easily accounted for by defining a normalised version. There are several ways to do this, each one leading to a different closure. We conclude this section with one such closure for SIS epidemics.
Since there are only two possible states one way to normalise probabilities is to assume that
\be \label{eq:onlyI}
p^{(S)}_I=\frac{[SI]}{[S]\langle \deg v\rangle} \text{ and } p^{(S)}_S = 1 - p^{(S)}_I.
\ee
With this, we have the following.
\begin{corollary}
Under approximation \eqref{eq:onlyI}
\be
[SSI] \approx [S] p^{(S)}_I(1-  p^{(S)}_I)  \langle \deg v(\deg v -1) \rangle,
\quad
[ISI] \approx [S](p^{(S)}_I)^2\langle \deg v(\deg v -1) \rangle.
\label{eq:closure_p_I}
\ee
\end{corollary}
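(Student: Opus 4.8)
\emph{Proof proposal.} The plan is to specialise Proposition \ref{prop:triples} to the two-state SIS setting ($m=2$, states $S$ and $I$, middle node $A=S$) and then feed in Assumptions \ref{ass:m}, \ref{ass:1} together with the normalisation \eqref{eq:onlyI} in place of \eqref{eq:pi}. Concretely, for $[ISI]$ the end-states coincide, so I would start from \eqref{eq:more-ex2} with $A_i=I$, $A=S$, giving
\[
[ISI]=\sum_{v\in\mathcal G}\P\{v=S\}\,E\big[N_I^{(v)}(N_I^{(v)}-1)\,\big|\,v=S\big].
\]
By Assumption \ref{ass:m} the conditional law of $N_I^{(v)}$ given $v=S$ is ${\rm Bin}(\deg v,p_I^{(v,S)})$, so $E[N_I^{(v)}(N_I^{(v)}-1)\mid v=S]=(p_I^{(v,S)})^2\deg v(\deg v-1)$; this is exactly the computation already carried out before Theorem \ref{thm:closure}. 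For $[SSI]$ the end-states differ ($S$ and $I$), so I would instead use \eqref{eq:more-ex} with $A_i=S$, $A=S$, $A_j=I$, and the analogous moment identity for the multinomial, $E[N_S^{(v)}N_I^{(v)}\mid v=S]=p_S^{(v,S)}p_I^{(v,S)}\deg v(\deg v-1)$.

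Next I would apply Assumption \ref{ass:2} to drop the $v$-dependence ($p_I^{(v,S)}\equiv p_I^{(S)}$, $p_S^{(v,S)}\equiv p_S^{(S)}$) and Assumption \ref{ass:1} to replace $\P\{v=S\}$ by $[S]/N$. Pulling the ($v$-independent) probability factors out of the vertex sum leaves
\[
[ISI]\approx [S](p_I^{(S)})^2\,\frac1N\sum_{v\in\mathcal G}\deg v(\deg v-1),
\qquad
[SSI]\approx [S]\,p_S^{(S)}p_I^{(S)}\,\frac1N\sum_{v\in\mathcal G}\deg v(\deg v-1),
\]
and recognising $\tfrac1N\sum_v \deg v(\deg v-1)=\langle\deg v(\deg v-1)\rangle$ gives the two displayed formulas once we substitute $p_S^{(S)}=1-p_I^{(S)}$ from \eqref{eq:onlyI}. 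Note that the value of $p_I^{(S)}$ itself need not be inserted — the corollary is stated in terms of $p_I^{(S)}$ — so the only place \eqref{eq:onlyI} enters is through the relation $p_S^{(S)}=1-p_I^{(S)}$, which is what distinguishes this closure from \eqref{eq:ii}: there $p_S^{(S)}$ was (implicitly) taken from \eqref{eq:pi}, whereas here it is fixed by the consistency condition \eqref{eq:cond1}.

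There is no serious obstacle here; the argument is essentially a bookkeeping exercise in which assumption is invoked at which step. The one point that deserves a sentence of care is precisely the interplay between \eqref{eq:pi} and \eqref{eq:onlyI}: equation \eqref{eq:pi} violates \eqref{eq:cond1}, so one cannot simultaneously set $p_I^{(S)}=[SI]/(\langle\deg v\rangle[S])$ and $p_S^{(S)}=[SS]/(\langle\deg v\rangle[S])$ inside the multinomial moment identities; the corollary resolves this by keeping $p_I^{(S)}$ as given in \eqref{eq:onlyI} and forcing $p_S^{(S)}=1-p_I^{(S)}$, and I would make that explicit so the reader sees why the $[SSI]$ closure here is $[S]p_I^{(S)}(1-p_I^{(S)})\langle\deg v(\deg v-1)\rangle$ rather than the product $[SS][SI]/(\,[S]\langle\deg v\rangle^2)\cdot\langle\deg v(\deg v-1)\rangle$ of Corollary \ref{cor:SISclosure}.
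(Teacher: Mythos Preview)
Your proposal is correct and follows exactly the route the paper has in mind: the corollary is stated without proof because it is the immediate specialisation of the computation leading to Theorem~\ref{thm:closure} (i.e.\ the intermediate identity $[A_iAA_j]=\sum_v\P\{v=A\}\,p_i^{(A)}p_j^{(A)}\deg v(\deg v-1)$ together with Assumption~\ref{ass:1}), with the single change that $p_S^{(S)}$ is taken as $1-p_I^{(S)}$ from \eqref{eq:onlyI} rather than from \eqref{eq:pi}. Your remark explaining why this yields a different $[SSI]$ closure than Corollary~\ref{cor:SISclosure} is a welcome clarification that the paper leaves implicit.
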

Note that while we may also assume
\be \label{eq:onlyS}
p^{(S)}_S=\frac{[SS]}{[S]\langle \deg v\rangle} \text{ and } p^{(S)}_I = 1 - p^{(S)}_S,
\ee
this and the resulting closure are not driven by the important quantity in the epidemics, namely the number of $[SI]$ links since the quantity does not show up in the closure. As such, it is expected that a normalised closure using \eqref{eq:onlyS} would perform badly. Indeed this is supported by numerical evidence. Furthermore, we also tried to normalise by simply dividing each `probability' with the sum of the two but this led to an ill-behaved closed pairwise model and thus we omitted it here.

\subsection{Closures for the Poisson distribution of states of the neighbours} \label{subsec:Poisson}
In the regime where $n$ is large and the epidemic level is not high, we assume that conditional on $v = A$ the collection $\{ N_i^{(v)} | v = A \}_{1\le i \le m-1}$ is a collection of independent Poisson random variables; one type is the prevalent one which we omitted from the collection - for example in the initial stages of SIR epidemic, we can use this to model $N_R$ and $N_I$ since $S$ is prevalent. We assume that the parameters of the Poisson distribution do not depend on specific $v$ and we formalise this as follows.
\begin{assumption} \label{ass:poimixing}
\be
N_i^{(v)} | v = A  \sim { \rm Poisson}(\lambda_i^{(A)}).
\ee
\end{assumption}
For example if $\lambda_i^{(v, A)} < n^{\alpha} << n$ ($\alpha <1$), the independence assumption is not restrictive, since Poisson random variables are sharply concentrated around their mean, and the error of the approximation below can be controlled for types $1 \le i \neq j \le m-1$. 
In Figure~\ref{fig:ER_Poisson_neighbours} we show an example where the Poisson distribution does fit well  the marginal of the number of infected nodes around susceptible ones. The independence assumption is harder to verify numerically, but we present at the end of this section how the Poisson model naturally arises in Erd\H{o}s-R\'enyi graphs with low connection probability and prove the independence assumption. 

\begin{figure}[h]
\centering
\includegraphics[height=0.65\textwidth]{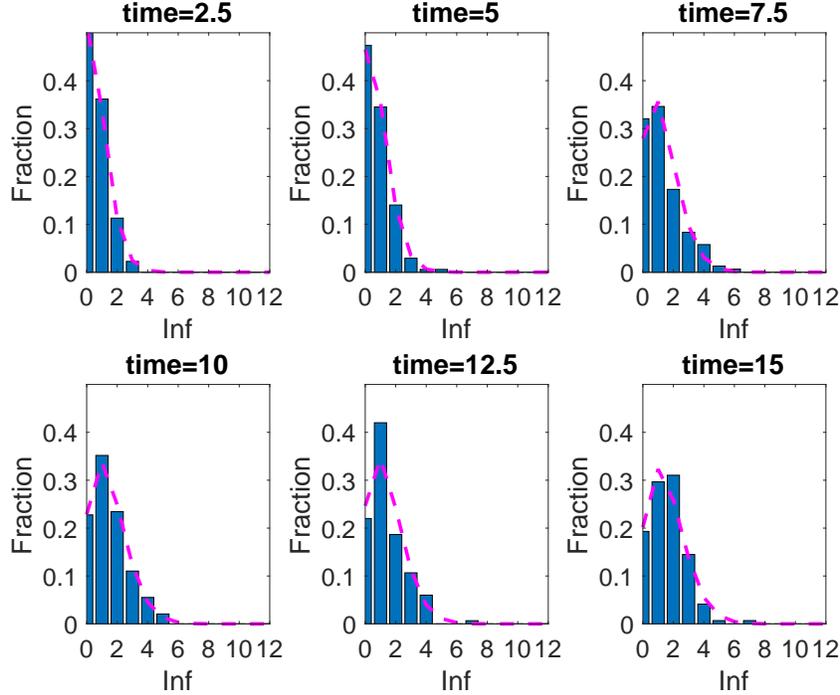}
\centering
\caption{The distribution of infected nodes around susceptible nodes at different points in the epidemic on a Erd\H{o}s-R\'enyi network with $N=200$ nodes with $\langle k \rangle=6$, $\gamma=1$ and $\tau=0.25$. Maximum degree in the network is twelve. This is based on one single realisation of the epidemic on one single realisation of the network. The dashed line  line stands for the Poisson distribution with mean given by $[SI]/[S]$  taken from simulation.}
\label{fig:ER_Poisson_neighbours}
\end{figure}

Assumption \ref{ass:poimixing} implies that
\[
E(N_i^{(v)}| v = A) = \lambda_i^{(A)}, \text{ for all } i \in \{ 1, \ldots, m \}.
\]
Moreover, the independence assumption on the coordinates implies the equality below,
\[
E \Big[ N_i^{(v)}N_j^{(v)} \Big| v= A\Big] = E \Big[ N_i^{(v)} \Big| v= A\Big]E\Big[N_j^{(v)} \Big| v= A\Big]= \lambda_i^{(A)}\lambda_j^{(A)}.
\]
Under Assumption \ref{ass:poimixing} we can first derive an approximation for $\lambda_i^{(A)}$:
\begin{align*}
[A_iA] &=\E(A_iA) = \E \Big( \sum_{v \in \mathcal G} \sum_{w: w\sim v} \mathbbm1\{ w=A_i, v = A \}\Big)\\
&=  \sum_{v \in \mathcal G} \P\{ v = A\} E(N_i^{(v)}| v = A) =  \sum_{v \in \mathcal G} \P\{ v = A\} \lambda_i^{(A)} = [A] \, \lambda_i^{(A)}.
\end{align*}
Thus we have
\be
\label{eq:lambdai} \lambda_i^{(A)} \approx \frac{[A_iA]}{[A]}, \text{ for all } i \in\{ 1,2, \ldots, m\}.
\ee
Starting again from \eqref{eq:more-ex} we obtain,
\[
[A_iAA_j]= \sum_{v \in \mathcal G} \P\{ v = A\} E \Big[ N_i^{(v)}N_j^{(v)} \Big| v= A\Big]
=\sum_{v \in \mathcal G} \P\{ v = A\}\lambda_i^{( A)}\lambda_j^{( A)} \approx [A]\lambda_i^{( A)}\lambda_j^{( A)}.
\]
We note that this relation holds also for $i=j$. Combining this equation with \eqref{eq:lambdai} leads to the following theorem.

\begin{theorem}\label{thm:Poissonclosure}
Under Assumption \ref{ass:poimixing} the expected value of triples can be given as
\be
[A_iAA_j] = \frac{[A_iA][A_jA]}{[A]}, \label{eq:gen-Poisson}
\ee
\end{theorem}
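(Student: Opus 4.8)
The plan is to combine Proposition~\ref{prop:triples} with the two distributional facts supplied by Assumption~\ref{ass:poimixing}, mirroring the multinomial derivation of Section~\ref{subsec:multi} but with Poisson moments replacing binomial ones. First I would record the two moment identities for a ${\rm Poisson}(\lambda)$ variable $X$: $\E[X]=\lambda$ and the factorial-moment identity $\E[X(X-1)]=\lambda^2$. Applying these conditional on $v=A$, together with the independence of the coordinates $\{N_i^{(v)}\mid v=A\}$, yields $\E[N_i^{(v)}N_j^{(v)}\mid v=A]=\lambda_i^{(A)}\lambda_j^{(A)}$ for $i\neq j$ and $\E[N_i^{(v)}(N_i^{(v)}-1)\mid v=A]=(\lambda_i^{(A)})^2$ for $i=j$. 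The key point is that the \emph{same} product $\lambda_i^{(A)}\lambda_j^{(A)}$ appears in both cases, so the $i=j$ and $i\neq j$ branches of Proposition~\ref{prop:triples} collapse into a single formula.

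Next I would substitute these conditional expectations into \eqref{eq:more-ex} and \eqref{eq:more-ex2}. Since by Assumption~\ref{ass:poimixing} the rates $\lambda_i^{(A)}$ carry no dependence on $v$, they pull out of the sum over vertices, leaving $[A_iAA_j]=\lambda_i^{(A)}\lambda_j^{(A)}\sum_{v\in\mathcal G}\P\{v=A\}$. Here I would use the elementary identity $\sum_{v\in\mathcal G}\P\{v=A\}=\E(\#\{v:v=A\})=[A]$, which is exact and, in contrast to the multinomial setting, requires no homogeneity hypothesis analogous to Assumptions~\ref{ass:1} or~\ref{ass:2}. This gives $[A_iAA_j]=[A]\,\lambda_i^{(A)}\lambda_j^{(A)}$.

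It then remains to express $\lambda_i^{(A)}$ in terms of observable pair counts. Running the analogous single-sum computation on $[A_iA]$, namely $[A_iA]=\E\big(\sum_{v}\sum_{w\sim v}\mathbbm1\{w=A_i,\,v=A\}\big)=\sum_{v}\P\{v=A\}\,\E(N_i^{(v)}\mid v=A)=[A]\,\lambda_i^{(A)}$, yields $\lambda_i^{(A)}=[A_iA]/[A]$, which is precisely \eqref{eq:lambdai}. Plugging this back into $[A]\,\lambda_i^{(A)}\lambda_j^{(A)}$ produces $[A_iAA_j]=[A_iA][A_jA]/[A]$, as claimed, valid uniformly for $i=j$ and $i\neq j$.

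Since every step is a direct substitution or an elementary moment computation, there is no genuine obstacle; the only point requiring care is the $i=j$ case, where one must invoke the Poisson factorial moment $\E[X(X-1)]=\lambda^2$ rather than the raw second moment $\E[X^2]=\lambda^2+\lambda$, so that the closure is truly symmetric in $i$ and $j$. It is also worth remarking that the approximation enters only through Assumption~\ref{ass:poimixing} (Poisson marginals, independence, and $v$-independence of the rates): the passage from $\sum_v\P\{v=A\}$ to $[A]$ is an exact identity, which is why no extra mixing assumption is needed here.
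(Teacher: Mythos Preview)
Your proposal is correct and follows essentially the same route as the paper: compute the conditional moments from the Poisson assumption, substitute into Proposition~\ref{prop:triples}, pull the $v$-independent rates $\lambda_i^{(A)}$ out of the vertex sum, and identify $\lambda_i^{(A)}=[A_iA]/[A]$ from the pair count. Your treatment is in fact slightly more explicit than the paper's in two places---you handle the $i=j$ case via the Poisson factorial moment $\E[X(X-1)]=\lambda^2$ rather than simply asserting the formula also holds there, and you correctly observe that $\sum_v\P\{v=A\}=[A]$ is an exact identity (the paper writes $\approx$)---but the underlying argument is the same.
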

which can also be found in~\cite{rand1999correlation}.

Focus for the moment on Poisson closures in SIS epidemics.

\begin{corollary}\label{cor:SISPoissonclosure}  For SIS epidemics, closure \eqref{eq:gen-Poisson} yields
\be
[SSI] \approx \frac{[SS][SI]}{[S]}, \qquad [ISI]  \approx \frac{[SI]^2}{[S]}.
\label{eq:closure_Poisson}
\ee
\end{corollary}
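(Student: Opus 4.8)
The plan is to obtain the corollary as a direct specialisation of Theorem~\ref{thm:Poissonclosure} to the SIS setting, where there are exactly $m=2$ states, which we label $A_1 = S$ and $A_2 = I$. Since Theorem~\ref{thm:Poissonclosure} (formula~\eqref{eq:gen-Poisson}) was derived under Assumption~\ref{ass:poimixing} for a general multi-type epidemic on $\mathcal G$, and was moreover noted to be valid both for $A_i \neq A_j$ and for $A_i = A_j$, nothing further needs to be assumed: we may simply read off the two triples appearing in the SIS pairwise system~\eqref{eq:PW_SIS_S}--\eqref{eq:PW_SIS_II}, namely $[SSI]$ and $[ISI]$, as instances of $[A_iAA_j]$.

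For $[SSI]$ I would take the central node to be in state $A = S$ with the two end-states $A_i = S$ and $A_j = I$ (the case $A_i \neq A_j$). Substituting into~\eqref{eq:gen-Poisson} gives
\[
[SSI] \approx \frac{[SS]\,[IS]}{[S]}.
\]
For $[ISI]$ the central node is again in state $A = S$, but now both end-states coincide, $A_i = A_j = I$ (the case $A_i = A_j$, for which~\eqref{eq:gen-Poisson} was explicitly observed to remain valid). Substituting gives
\[
[ISI] \approx \frac{[IS]\,[IS]}{[S]} = \frac{[IS]^2}{[S]}.
\]

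The only bookkeeping step is to invoke the symmetry of pair counts on an undirected graph, $[IS] = [SI]$ (immediate from the definition of $[\cdot]$ as an expected number of ordered pairs in the given state configuration, together with $g_{ij}=g_{ji}$), which turns the two displays above into the stated forms $[SSI] \approx [SS][SI]/[S]$ and $[ISI] \approx [SI]^2/[S]$. I do not anticipate any real obstacle here: the content is entirely contained in Theorem~\ref{thm:Poissonclosure}, and the corollary is a relabelling. If one wanted to be scrupulous, the single point worth spelling out is that the $i=j$ case is legitimate — but that was already justified in the paragraph preceding Theorem~\ref{thm:Poissonclosure}, where the independence of the conditional Poisson coordinates forces $E[N_i^{(v)}N_j^{(v)}\mid v=A] = \lambda_i^{(A)}\lambda_j^{(A)}$ for all $i,j$ including $i=j$, since a single $N_i^{(v)}$ need not be squared against a distinct coordinate.
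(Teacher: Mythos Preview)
Your proposal is correct and matches the paper: the corollary is stated there without proof, being an immediate specialisation of Theorem~\ref{thm:Poissonclosure} to the two-state SIS setting.

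One correction to your closing aside, though: independence does \emph{not} give $E[N_i^{(v)}N_i^{(v)}\mid v=A]=(\lambda_i^{(A)})^2$ --- that quantity is $E[(N_i^{(v)})^2]$, which for a Poisson variable equals $\lambda_i^{(A)}+(\lambda_i^{(A)})^2$. The reason the $i=j$ case still produces~\eqref{eq:gen-Poisson} is that Proposition~\ref{prop:triples} prescribes the \emph{factorial} moment $E[N_i^{(v)}(N_i^{(v)}-1)\mid v=A]$ when the two end-states coincide (formula~\eqref{eq:more-ex2}), and for a Poisson$(\lambda_i^{(A)})$ variable this factorial moment is exactly $(\lambda_i^{(A)})^2$. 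This is what the paper's remark ``this relation holds also for $i=j$'' is implicitly relying on.
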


The Poisson model is not as ad-hoc as it may initially seem, but it actually arises from the multinomial model in the case of Erd\H{o}s-Re\'nyi graphs with low average degree. For the remainder of this section we restrict the discussion to Erd\H{o}s-R\'enyi graphs of $N+1$ nodes, but with low average degree $\langle k \rangle = D$. 

In this case, we assume that the probability of a link being present in the network is $\frac{D}{N}$. The degree of each node $v$ 
is a random Binomial number $\deg v \sim \text{Bin}(N, \frac{D}{N})$. The law of rare events suggests that for $N$ large, 
the degree distribution is approximately Poisson$(D)$, so for this particular discussion we assume the network has marginal degree distribution
\[
\deg v \sim \text{Poisson}(D).
\]
For clarity of the exposition, focus only on $SIS$ epidemics on the network. We still operate under the multinomial link distribution (Assumption \ref{ass:m}), that is, 
\[
(N_S^{(v)}, N_I^{(v)}) | v = S \sim \text{Mult}(\deg v, 2; p_S^{(S)}, p_I^{(S)}). 
\]
However, here the number of trials of the multinomial is a random Poisson number $\deg v$. Given that the degree is $x \in \N$, then the distribution of types is multinomial. We first use the law of total probability to find the marginal distribution of $N_I^{(v)}$:
\begin{align*}
	\P\{ N_{I}^{(v)}  = k\} 
	&= \sum_{x = 0}^{\infty} P\{ N_{I}^{(v)}  = k | \deg v = x\} \P\{ \deg v = x \} \\
	&= \sum_{x = k}^{\infty} { x \choose k }( p_I^{(S)} )^k (1- p_I^{(S)})^{x - k} e^{-D} \frac{D^x}{x!}\\
	&= e^{-D}\frac{D^k  ( p_I^{(S)} )^k }{k!}  \sum_{x = k}^{\infty}  \frac{ (D(1- p_I^{(S)}))^{x - k}}{(x-k)!}=e^{- p_I^{(S)} D}\frac{D^k  ( p_I^{(S)} )^k }{k!},
\end{align*}
and therefore the conditional marginal $[N_{I}^{(v)} | v =S ] \sim \text{Poisson}(Dp_I^{(S)})$. 
Similarly, the marginal distribution of $N_{S}^{(v)}| v = S$ is that of a Poisson$(Dp_S^{(S)}))$ and conveniently, we may write 
\[
(N_{S}^{(v)}| v = S) =( \deg v - N_{I}^{(v)}| v = S) = \deg v - \text{Bin}(\deg v, p_I^{(S)}) \sim \text{Poisson$(Dp_S^{(S)}))$ }.
\]
Moreover, these two marginals are in fact independent. Here is the short calculation. Since $\deg v$ can take values in $\N$, as is Poisson distributed, the same holds for both $N_{S}^{(v)}$ and $N_I^{(v)}$. Then
\begin{align*}
\P\{ N_{S}^{(v)}= \ell, N_{I}^{(v)}=m\}& = \P\{ \deg v - N_{I}^{(v)} = \ell, N_{I}^{(v)}=m\} =  \P\{ \deg v= m + \ell, N_{I}^{(v)}=m\}\\
&= P\{ N_{I}^{(v)}=m |  \deg v= m + \ell\} \P\{  \deg v= m + \ell \}\\
&={{\ell + m} \choose m} (p_I^{(S)})^m  (1- p_I^{(S)})^\ell e^{-D}\frac{D^{\ell + m}}{(\ell + m)!}\\
&= e^{- p_I^{(S)}D} \frac{(p_I^{(S)} D)^m}{m!} \cdot  e^{-(1- p_I^{(S)})D} \frac{ (D(1- p_I^{(S)}))^\ell}{\ell!}\\
&= \P\{N_{I}^{(v)}=m\}\P\{ N_{S}^{(v)}= \ell\}.
\end{align*}

To summarise, on Erd\H{o}s-R\'enyi graphs with low average degree but many nodes, the fact that the marginal degree is Poisson distributed reduces the multinomial link model to the Poisson model. For the Erd\H{o}s-R\'enyi case, the assumption that the marginal distributions of types are Poisson, with rates given by $ \lambda_S^{(S)} = \langle k\rangle p_S^{(S)}$ and $ \lambda_I^{(S)} =  \langle k\rangle p_I^{(S)}$ respectively, and the fact that they are independent, is proven. 

Because of this, we expect that when the Poisson closure is used for the mean-field epidemic model on Erd\H{o}s-R\'enyi graphs, the approximation performs at least as well as the one using the moments closure given by Corollary \ref{cor:SISclosure}. This is also suggested by the simulations in the middle row of Figure~\ref{fig:incidence_pairwise_v_sim}, where the two curves lie on top of each other, and both approximate the semi-stable regime better than the binomial closure, when the infection rate is not too small. 

\section{Comparison of the closed pairwise model to stochastic simulations and further results}\label{sec:numerics_and_threshold}
We performed a fair number of numerical test where we compared the accuracy of four closures (i.e. binomial, Poisson, moments and $p_I$ given by equations \eqref{eq:closure_reg_n_takeaway_one}, \eqref{eq:closure_Poisson}, \eqref{eq:SSI_closure_degv}-\eqref{eq:ISI_closure_degv} and \eqref{eq:closure_p_I}, respectively) by comparing the expected proportion of infected nodes in time from the closed system with that resulting from explicit stochastic simulation of the SIS dynamics on regular, Erd\H{o}s-R\'enyi and bi-modal networks. It is well known that agreement between closed pairwise systems and simulation tend to improve for large epidemics. In this case this is equivalent to either having dense networks or high rates of transmission, when assuming that the recovery rate is fixed. However, from the viewpoint of deriving closures, it is well-known that the most
interesting behaviour may be for networks with low average degree. Due to this our tests are conducted for networks with an average degree equal to six. The recovery rate is also fixed and the value of the transmission rate $\tau$ is varied to go from small to large epidemics. The results are shown in Figure~\ref{fig:incidence_pairwise_v_sim}.

Several observations can be made. As expected, for large epidemics the various closures lead to very similar output and the agreement with simulation is good. However, the Poisson closure leads to an over estimation in all cases, while the binomial closure does better especially for regular and bi-modal networks. For regular networks, the moments and the binomial closure coincide. In general, the binomial closure does better on regular networks. The new moments closure in many cases leads to results which are close to those based on the Poisson closure. One such case is when the  graph is Erd\H{o}s-R\'enyi with low average degree where the multinomial link model becomes the Poisson model, as we discussed in the previous section. 

In arbitrary networks of low average degree the independence assumption for the Poisson marginals is unrealistic. This is because the rates add up to the average degree but the probability of the sum of two independent Poisson to be equal to the degree of a node, for all nodes, with small fluctuations is small. In other words, it is expected that the binomial or moments approximations would perform better as the total degree of a node is built into the modelling. This can be seen from the closure's performance in Figure ~\ref{fig:incidence_pairwise_v_sim} on the regular and bi-modal degree networks when the average degree is low (first and last row of Figure~\ref{fig:incidence_pairwise_v_sim}).

When the infection rate $\tau$ of the SIS epidemics is low, the probability $p_I^{(S)}$ is low as well. Then the mass function of the binomial distribution with probability of success $p_I^{(S)}$ and $\langle k \rangle$ trials and the mass function of the Poisson distribution with rate $\langle k \rangle p_I^{(S)}$ are very close. This implies that the information needed for judging which closures are suitable for the epidemic model is not contained only in the marginal distributions of types around nodes.  

In Figure~\ref{fig:incidence_pairwise_v_sim} the first column shows the estimated mean field epidemic model using all closures for low $\tau$, and the moment closure (cyan line) does estimate better than, or as good as, the Poisson one, while the empirical distribution of infected nodes is well-approximated by both Binomial and Poisson. However, when the value of $\tau$ becomes high, the marginal distributions do contain more information. For example in Figure~\ref{fig:REG_Poisson_Binom_neighbours}, we see that the binomial mass function (black line) approximates the empirical distribution better than the Poisson (dashed magenta line) and this is reflected in the better approximation that can be seen in the first row of Figure~\ref{fig:incidence_pairwise_v_sim}. 

\begin{figure}[h!]
\centering
\includegraphics[width=0.325\textwidth]{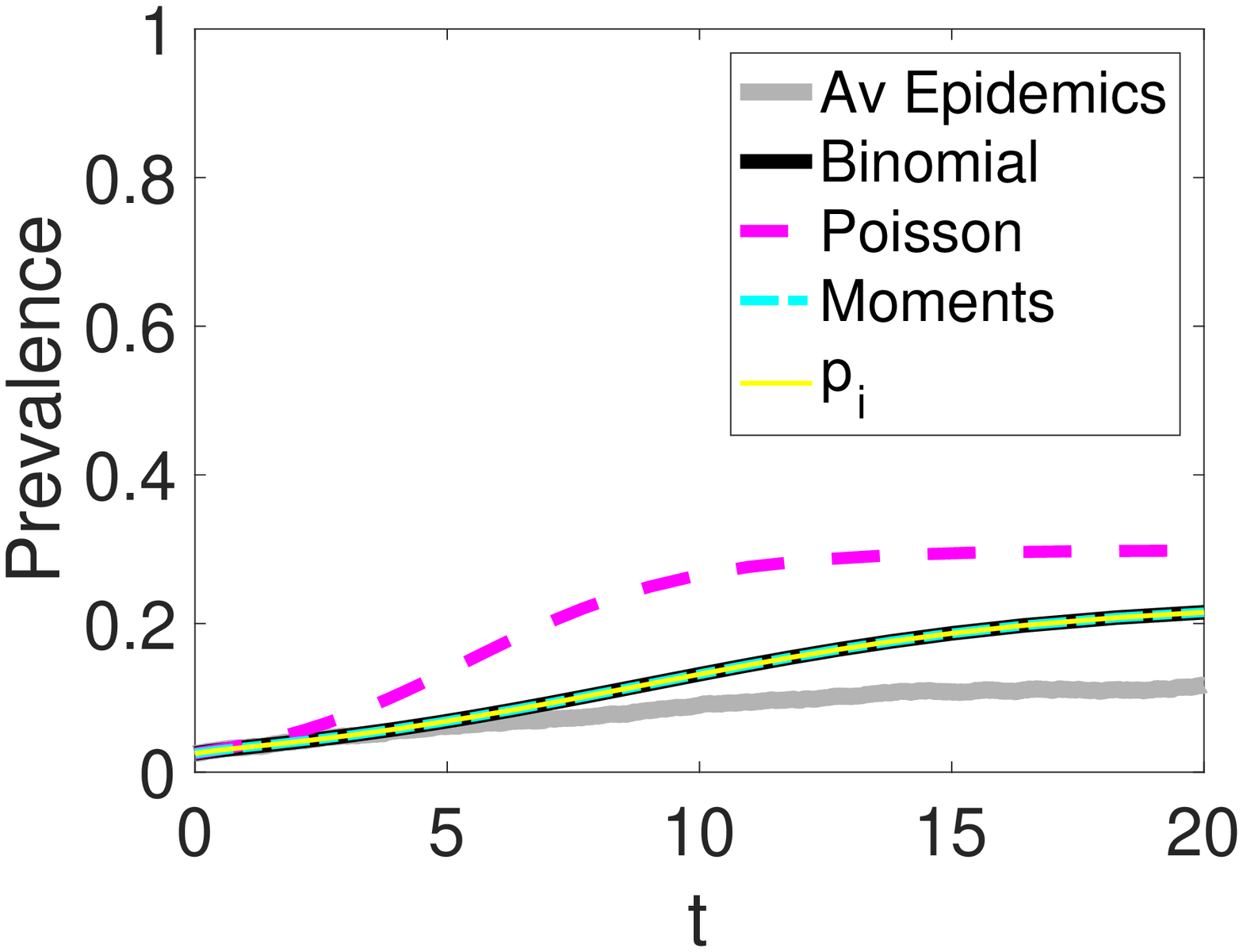}
\includegraphics[width=0.325\textwidth]{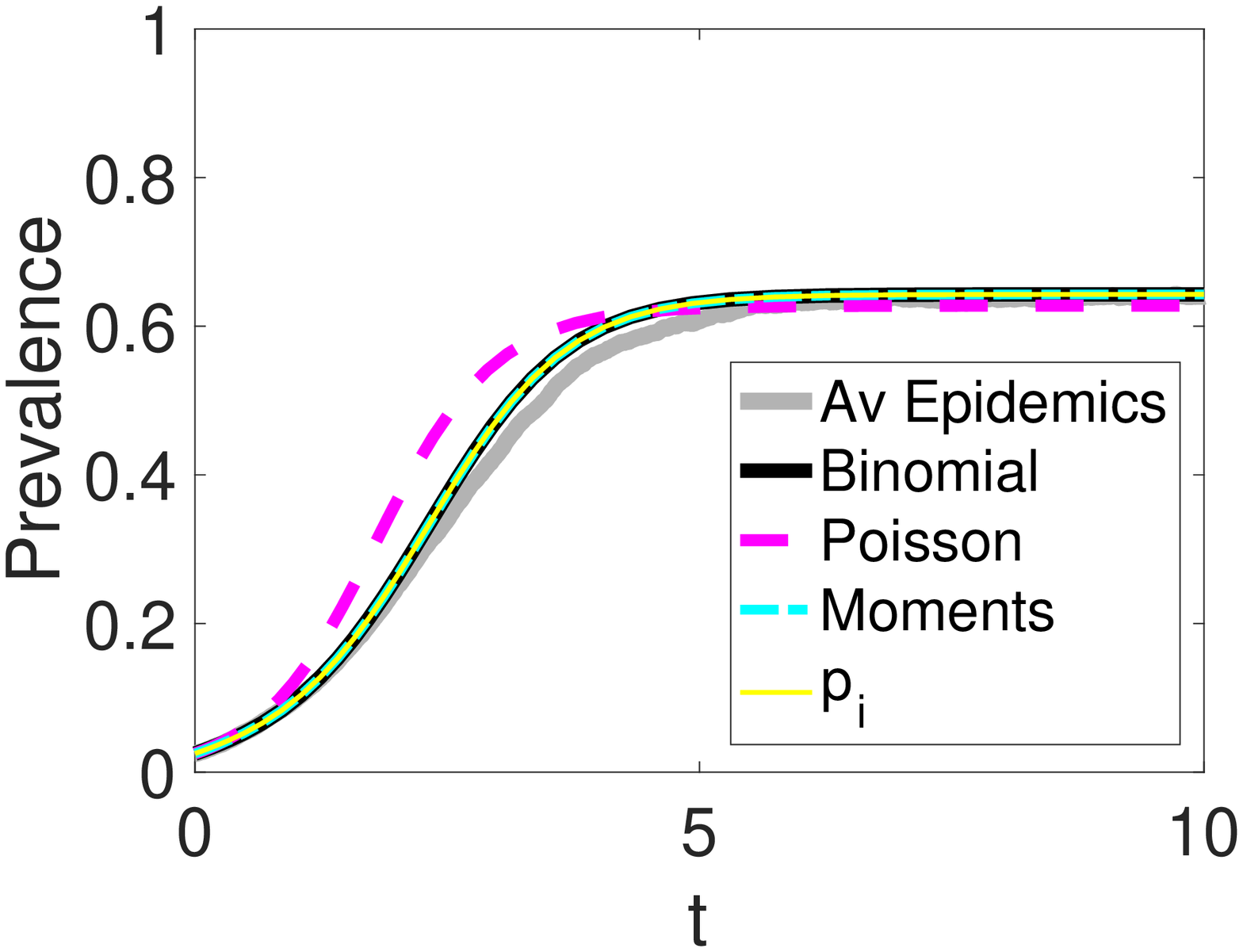}
\includegraphics[width=0.325\textwidth]{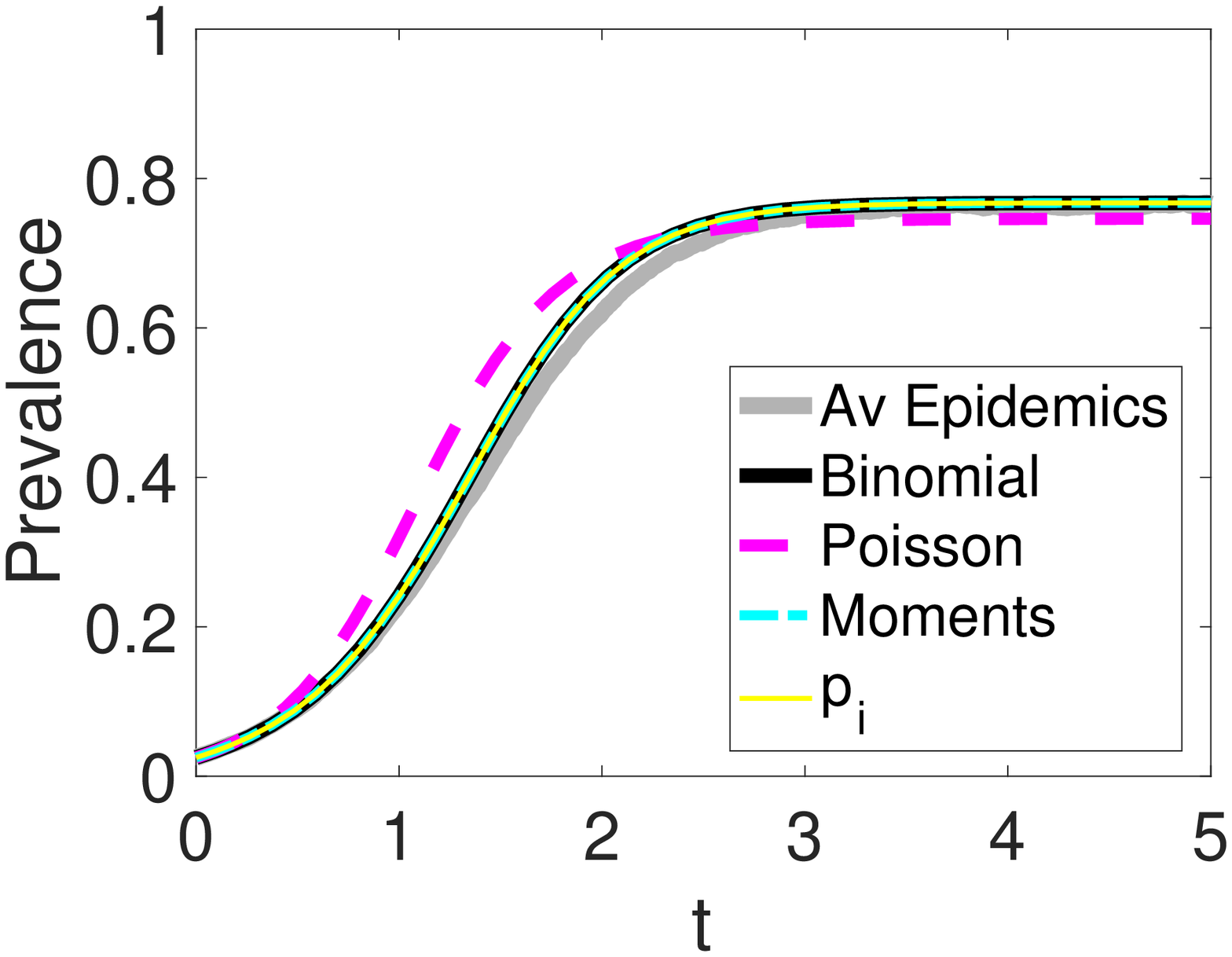}
\includegraphics[width=0.325\textwidth]{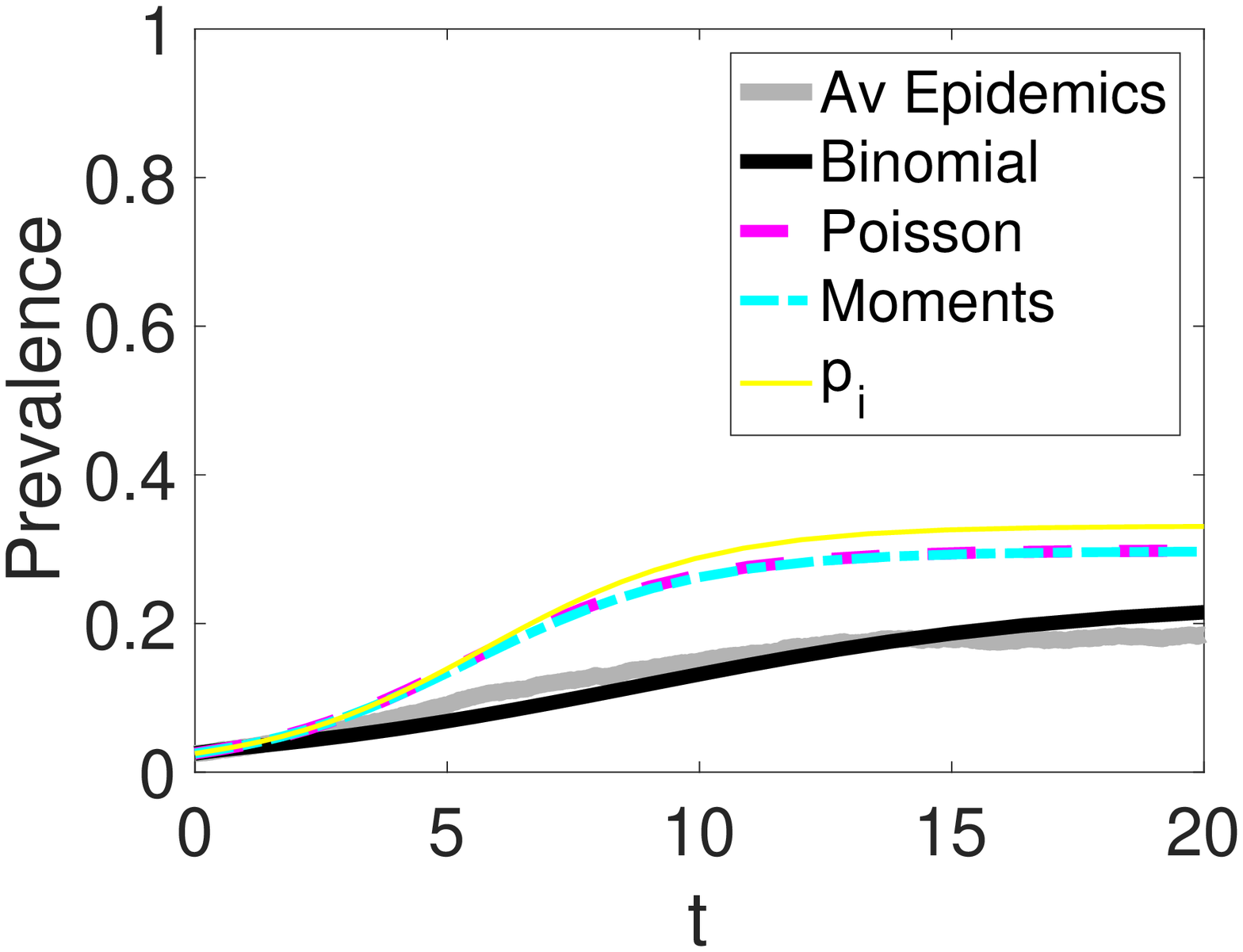}
\includegraphics[width=0.325\textwidth]{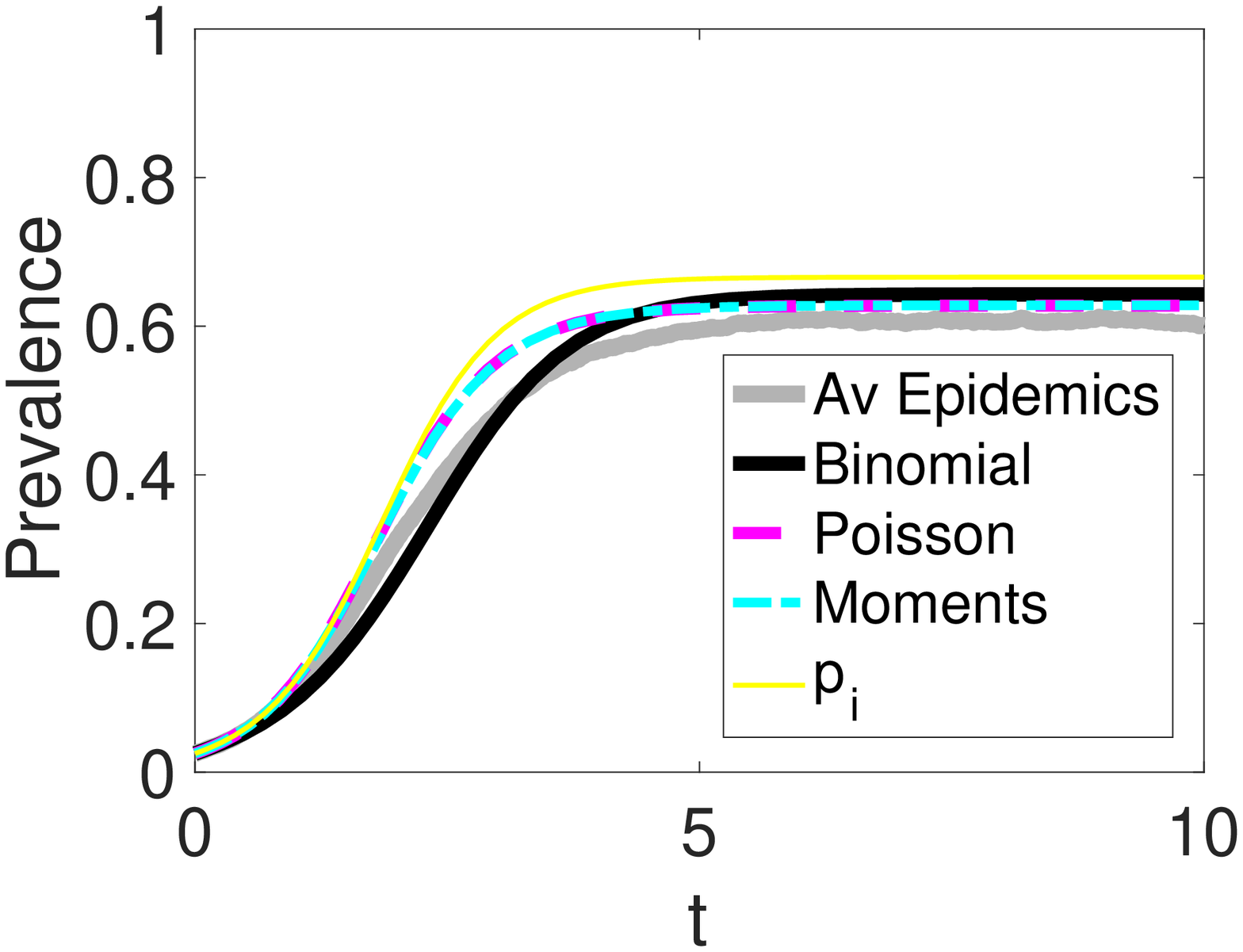}
\includegraphics[width=0.325\textwidth]{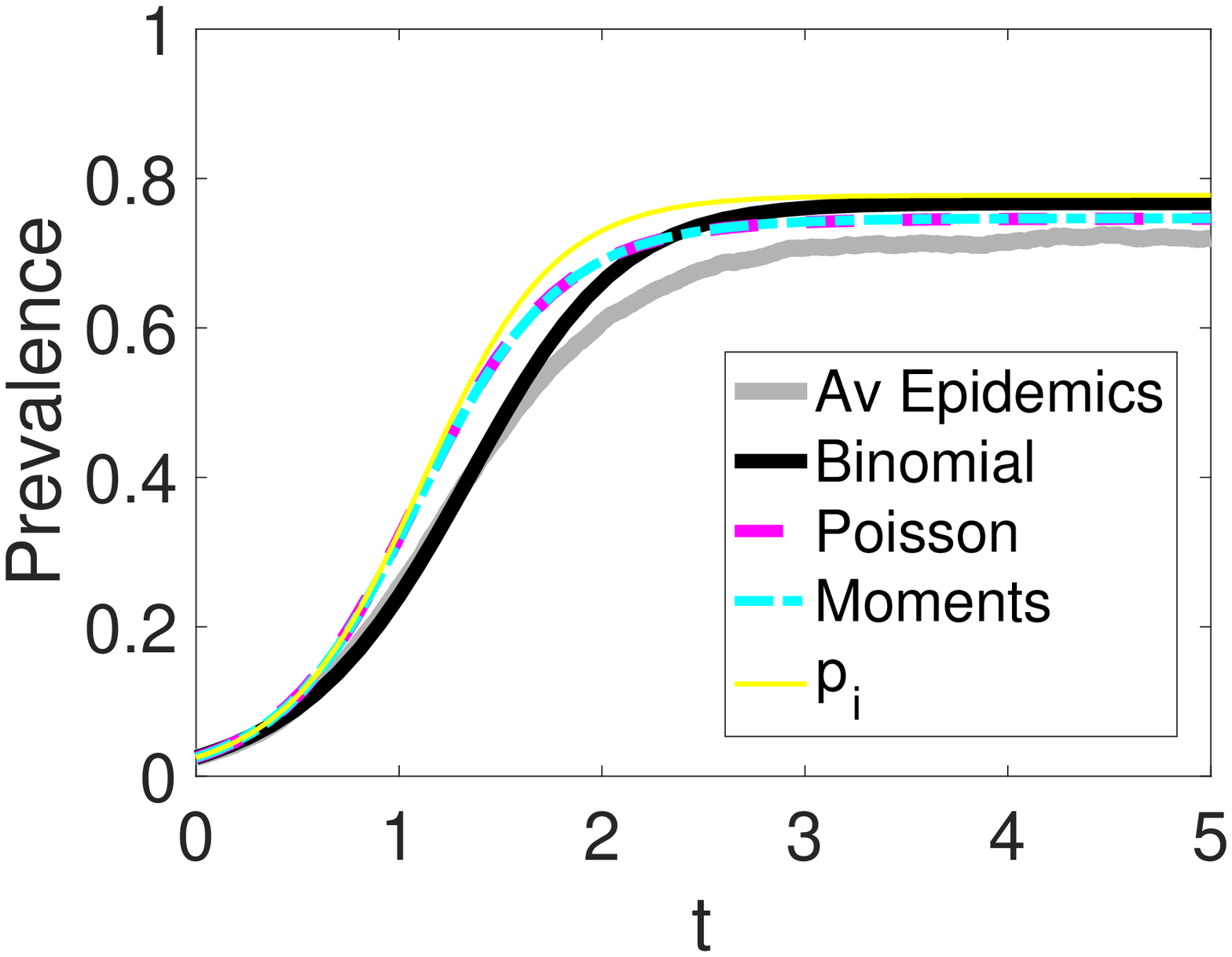}
\includegraphics[width=0.325\textwidth]{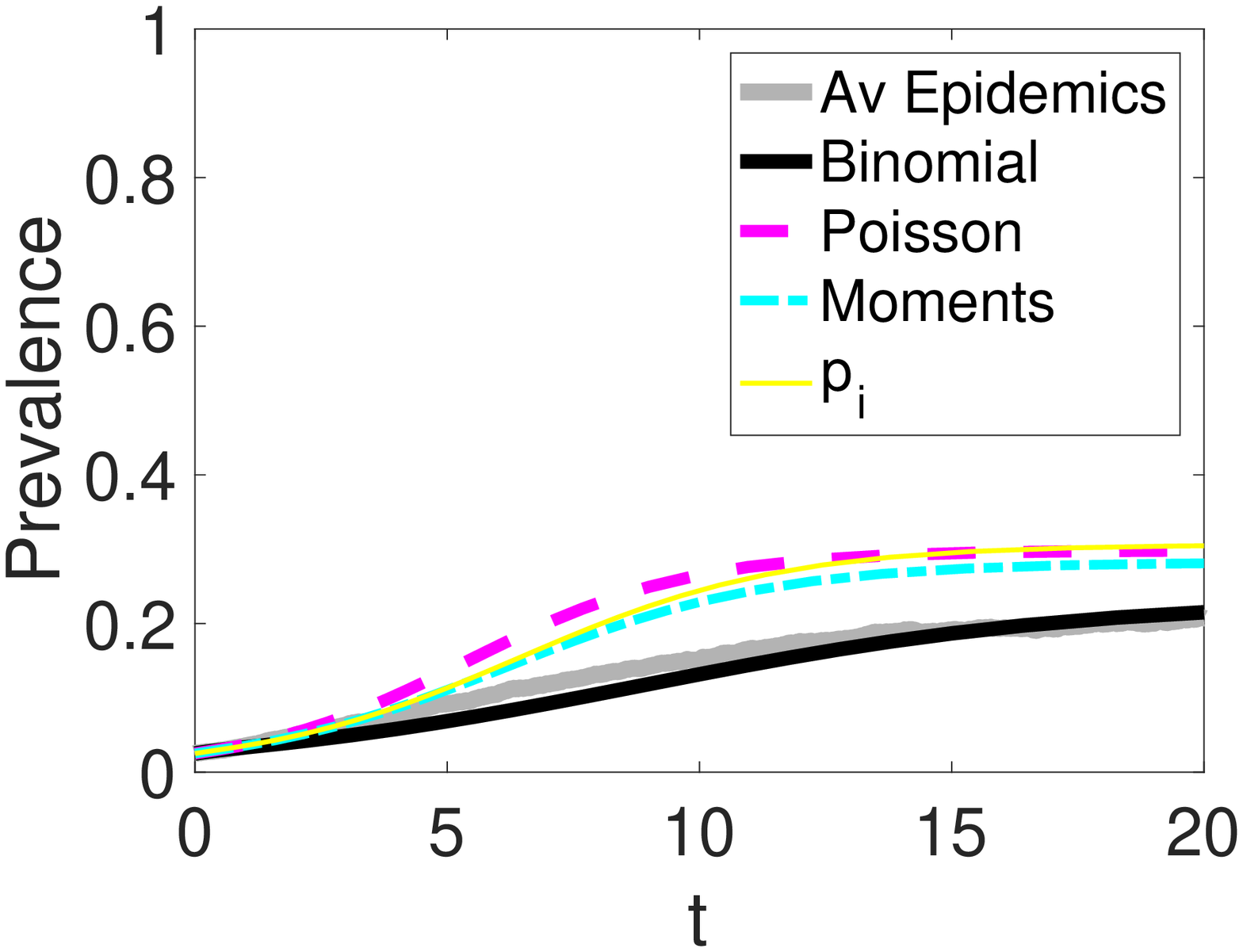}
\includegraphics[width=0.325\textwidth]{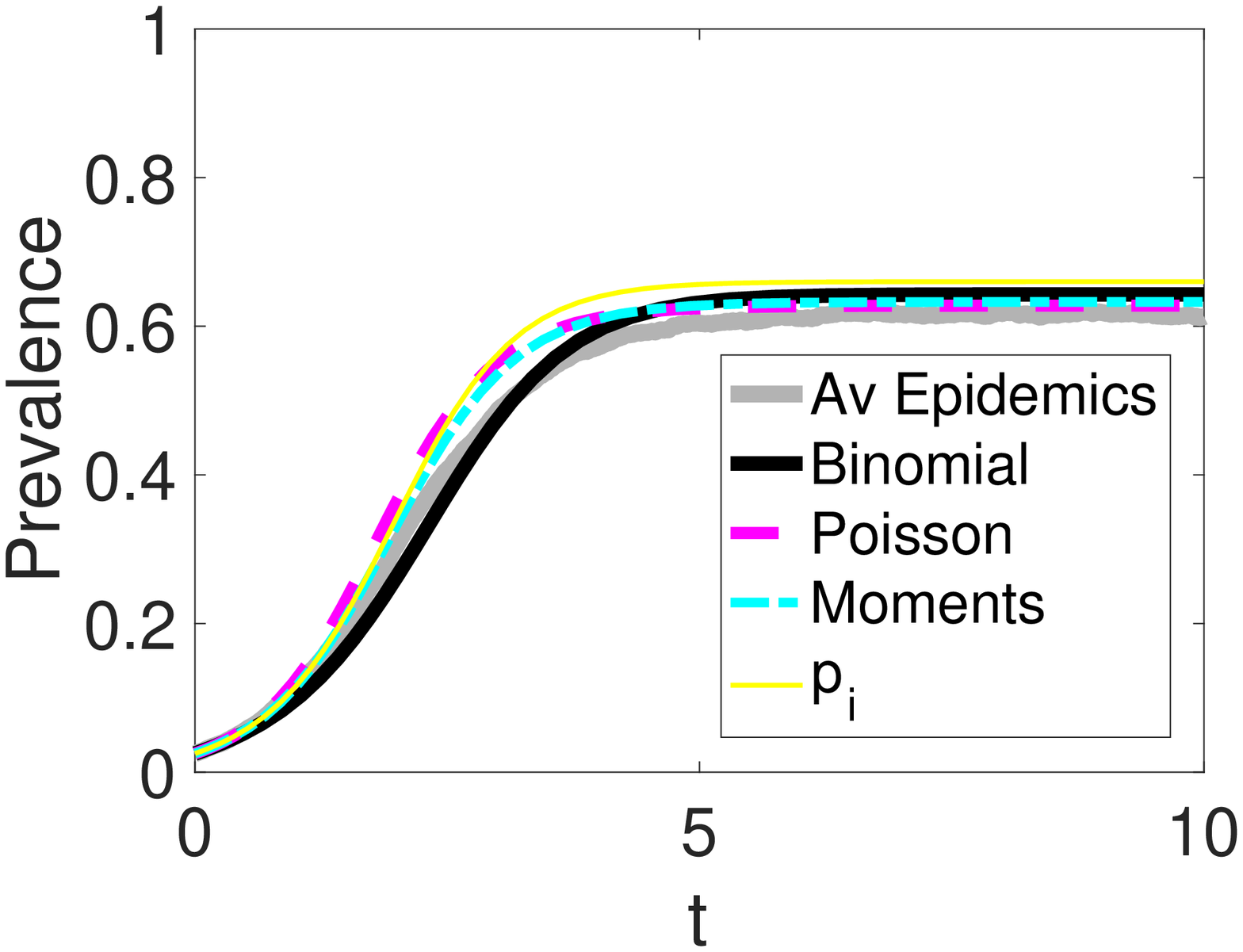}
\includegraphics[width=0.325\textwidth]{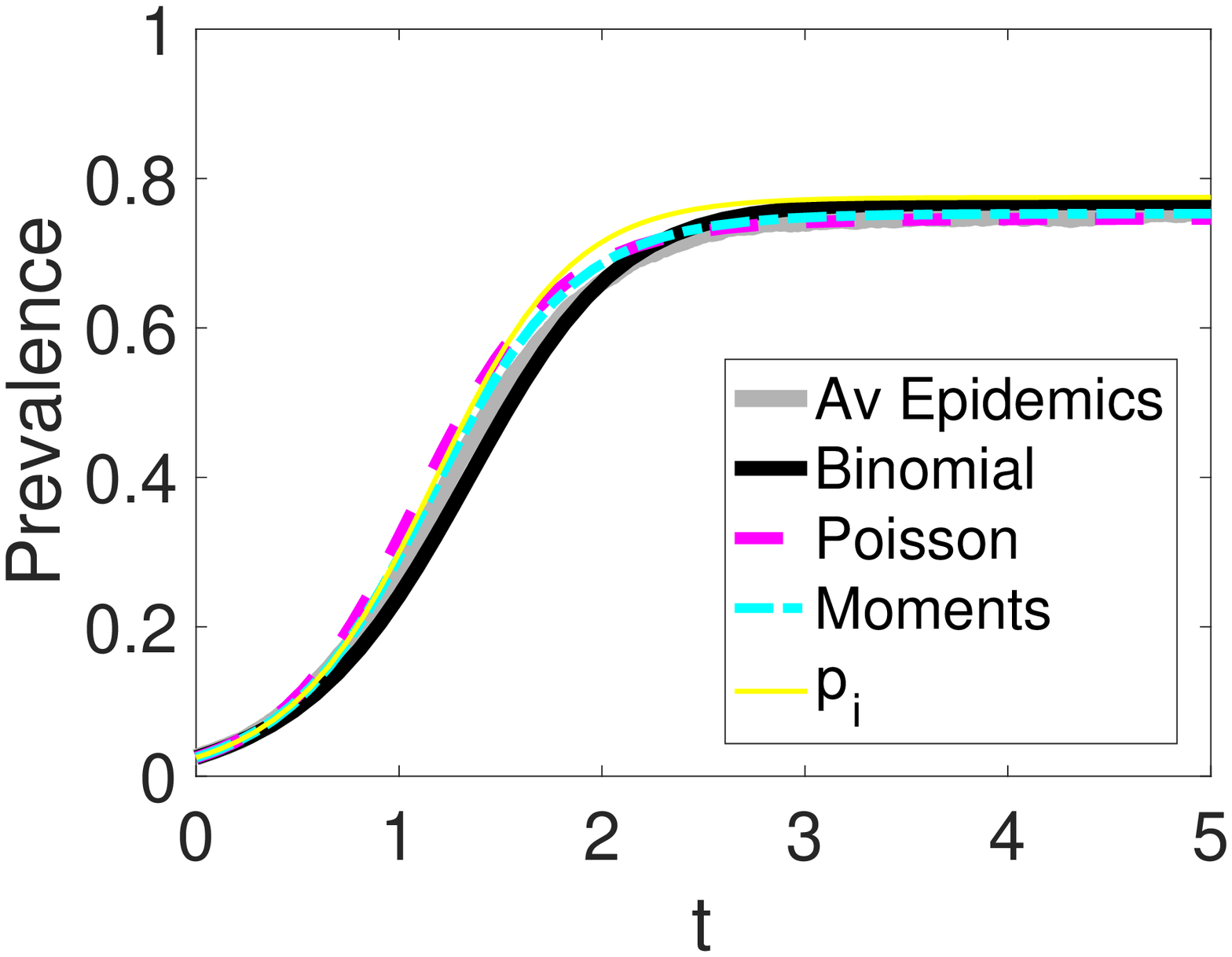}
   \centering
    \caption{The average of 200 individual realisations (thick grey line) together with output from four mean-field models (binomial (black), Poisson (dashed magenta), moments (cyan) and $p_I$(yellow) given by equations \eqref{eq:closure_reg_n_takeaway_one}, \eqref{eq:closure_Poisson}, \eqref{eq:SSI_closure_degv}-\eqref{eq:ISI_closure_degv} and \eqref{eq:closure_p_I}, respectively).
    The regular, Erd\H{o}s-R\'enyi and bi-modal networks are shown on the top, middle and bottom row, respectively. All networks have an average degree of $\langle k \rangle=6$ with the bi-modal having half of the nodes with degree four and the the other half with degree eight. In all figures $N=200$ and $\gamma=1$. The rate of infection in the left, middle and right columns is $\tau=0.25$, $\tau=0.5$ and $\tau=0.75$, respectively. Each epidemic starts with five infected nodes chosen uniformly at random and only epidemics that reach ten infected nodes count towards the average with time reset to $t=0$ when the state with ten infected nodes is first reached.}
    \label{fig:incidence_pairwise_v_sim}
\end{figure}



Applying the slightly more general closures \eqref{eq:SSI_closure_degv}-\eqref{eq:ISI_closure_degv} in the unclosed pairwise system \eqref{eq:PW_SIS_S}-\eqref{eq:PW_SIS_II} leads to following equation for the expected number of $[SI]$ pairs,
\be
\dot{[SI]}=\tau \mathcal{D} \frac{[SS][SI]}{[S]}- \tau \mathcal{D} \frac{[SI][SI]}{[S]}-\tau [SI]-\gamma [SI]+\gamma [II], \label{eq:SI_epi_threshold_1}
\ee
where
\be
\mathcal{D}=\frac{\langle \deg v(\deg v -1) \rangle}{\langle \deg v \rangle^2}.
\ee
Since the epidemic is driven by the number and growth rate of $[SI]$, we can analyse the equation above at the disease free steady state, i.e. $([S],[I],[SS],[SI],[II])=(N,0,\langle \deg v \rangle N,0,0)$.
This allows us to work out the rate of growth by looking at all the positive terms in~\eqref{eq:SI_epi_threshold_1} when evaluated at time $t=0$. This leads to
\be
\dot{[SI]} \sim \tau \mathcal{D} \frac{\langle \deg v \rangle N}{N}[SI]=\frac{\tau \langle \deg v(\deg v -1) \rangle}{\langle \deg v \rangle}[SI]. \label{eq:SI_epi_threshold_2}
\ee
However, the lifetime of an S-I link is simply $1/(\tau+\gamma)$ and hence, the average number of S-I links produced by a typical S-I link during its lifetime is
\be
\mathcal{R}=\frac{\tau}{\tau+\gamma}\frac{\langle \deg v(\deg v -1) \rangle}{\langle \deg v \rangle},
\ee
which is a well know quantity in the epidemics on networks mathematical theory~\cite{diekmann2000mathematical}. This really reassuring 
as it suggests that our newly derived closure is based on sound assumptions. 

\section{Further extensions of the method} \label{sec:further_extensions}
%
\subsection{Conditioning on a link}

There are many ways to count triples in a network. In the multinomial link example we counted triples $[ABC]$ by focusing on the state of the middle node. A different way would be to count them by looking at the neighbours of node $v_2$, given that the link $(v_2, v_3) = (BC)$. In order to count this way we will use equation \eqref{eq:linky} by conditioning on the values of $v_2$ and $v_3$.

Let $\mathcal G$ be an $n$-regular graph.  Assume we want to compute closures of the form
$[A_i A B]$, so we are given that the state of $v_2 = A$ and state of $v_3 = B$,  where $A, B \in \{ A_1, \ldots, A_m\}$. Homogeneity of the graph implies that for any triple $(v_1, v_2, v_3) \in \Pi_3$ the conditional probabilities
$
P\{ v_1 = A_i | (v_2, v_3) = (A, B) \}
$
do not depend of the choice of particular triple $(v_1, v_2, v_3)$ and they are only a function of $A_i, A$ and $B$. Denote their common value
\[
P\{ v_1 = A_i | (v_2, v_3) = (A, B) \} = p_{ A_i | A - B }.
\]
Now, we compute from equation \eqref{eq:linky} to obtain
\begin{align*}
[A_i A B] &= \sum_{(v_1, v_2, v_3) \in \Pi_3} P\{ v_1 = A_i | (v_2, v_3) = (A, B) \} \P\{ (v_2, v_3) = (A, B) \} \\
&= \sum_{(v_1, v_2, v_3) \in \Pi_3} p_{ A_i | A - B } \P\{ (v_2, v_3) = (A, B) \}\\
&= \sum_{( v_2, v_3) = (A, B)} \P\{ (v_2, v_3) = (A, B) \} \sum_{v_1: v_1 \sim v_2, v_1 \neq v_3} p_{ A_i | A - B }\\
&= \sum_{( v_2, v_3) = (A, B)} \E(\mathbbm1\{ (v_2, v_3) = (A, B) \}) (n-1) p_{ A_i | A - B }\\
&= [AB] (n-1) p_{ A_i | A - B }.
\end{align*}
Hence, in order to compute the closure $[A_i A B]$, one needs a good approximation for the probabilities  $p_{ A_i | A - B }$. 

Note that, again, under Assumption \ref{ass:m} of the multinomial link model of Section \ref{subsec:multi}, one can immediately see that
\[
p_{ A_i | A - B } = p_i^{(v_2, A)} = p_i^{(A)} =  \frac{[A_iA]}{n [A]},
\]
which leads to the same result as given in Theorem \ref{thm:closure} and in~\cite{Taylor2012JMB,barnard2019epidemic}. The multinomial link model does not take into account the extra information given by $B$, so we expect that more refined closures can be discovered by taking this information into account.

\subsection{Closures for the Uniform distribution of states of the neighbours}

We reiterate the arguments of Section \ref{subsec:multi} in an example where the conditional distribution of types in neighbouring nodes is uniform.
For simplicity we are assuming that the network is $n$-regular.

\begin{example}(Uniform link distribution around a node.) Consider $m$ possible states for a node on a network $\mathcal G$ that is $n$-regular. Furthermore, we know that given the state $A$ of a node $v$ there exists an integer number $k_{i,j}^{(A)}$, $0 \le k_{i,j}^{(A)} \le n$ the conditional distribution
\be\label{ass:mixing}
(N_i^{(v)} , N_j^{(v)}) | v = A \sim \text{Uniform}[\Delta_{k_{i,j}^{(A)}}],
\ee
where $\Delta_{\ell} = \{ (x_1, x_2):  x_i \in \Z_+, x_1+x_2 \le \ell \}$. Using this, and starting from equation \eqref{eq:more-ex}, we calculate
\begin{align*}
[A_iAA_j] &= \sum_{v \in \mathcal G} \P\{ v = A\} E \Big[ N_i^{(v)}N_j^{(v)} \Big| v= A\Big]\\
&=\sum_{v \in \mathcal G}    \P\{ v = A\} \sum_{x = 0}^{k_{i,j}^{(A)}}\sum_{y=0}^{k_{i,j}^{(A)}-x}xy P\{ N_i^{(v)} = x, N_j^{(v)}=y \}\\
&=\sum_{v \in \mathcal G}    \P\{ v = A\} \sum_{x = 0}^{k_{i,j}^{(A)}}\sum_{y=0}^{k_{i,j}^{(A)}-x}xy \frac{2}{(k_{i,j}^{(A)}+1)(k_{i,j}^{(A)}+2)}.
\end{align*}
The last line comes from the fact the conditional joint distribution is uniform on the simplex $\Delta_{k_{i,j}^{(A)}}$. Now, the two inner sums can be directly computed to be
\begin{align*}
 \sum_{x = 0}^{k_{i,j}^{(A)}}\sum_{y=0}^{k_{i,j}^{(A)}-x}xy 
 &= \frac{1}{2}\sum_{x = 1}^{k_{i,j}^{(A)}} \big\{ x ((k_{i,j}^{(A)})^2+ k_{i,j}^{(A)})  - x^2 (2 k_{i,j}^{(A)}+1) + x^3\big\} \\
 &= \frac{1}{24} k_{i,j}^{(A)}(k_{i,j}^{(A)}+1)(k_{i,j}^{(A)}-1)(k_{i,j}^{(A)}+2).
\end{align*}
Substitute in the calculation for $[A_iAA_j]$ to obtain
\begin{align} \label{eq:ucl}
[A_iAA_j] &= \frac{1}{12}\sum_{v \in \mathcal G} \P\{ v = A\}  k_{i,j}^{(A)}(k_{i,j}^{(A)}-1) \approx \frac{[A]}{12} k_{i,j}^{(A)}(k_{i,j}^{(A)}-1).
\end{align}
The last approximation holds when $ \P\{ v = A\} \approx [A]N^{-1}$ and when the number $k_{i,j}^{(A)}$ does not depend on $v$. 
It may be necessary to approximate $k_{i,j}^{(A)}$ using the network; one such approximation can be obtained by taking expected values 
\begin{align*}
[AA_i] &= \sum_{v \in \mathcal G} \P\{ v = A \} E[N_i^{(v)}| v =A]  \approx  \sum_{v \in \mathcal G} \frac{[A]}{N} \frac{k_{i,j}^{(A)}}{2}= [A]  \frac{k_{i,j}^{(A)}}{2}.
\end{align*}
This gives that 
$
k_{i,j}^{(A)} \approx 2 \frac{[AA_j]}{[A]}.
$
A similar calculation would give that $k_{i,j}^{(A)} = 2 \frac{[AA_i]}{[A]}$. We can combine the two to obtain two different estimates for $k_{i,j}^{(A)}$, namely
\[
\hat k_{i,j}^{(A)} = \frac{2}{[A]} \sqrt{[AA_i][AA_j]}  \quad \text{ or } \quad \tilde k_{i,j}^{(A)} = \frac{1}{[A]}([AA_i] + [AA_j]). 
\]
A final substitution of these estimates in \eqref{eq:ucl} yields two different closures for $[A_iAA_j]$. In fact, there are infinitely many closures implied; if one introduces a parameter $\alpha \in [0,1]$ then any convex combination $\alpha \hat k_{i,j}^{(A)} + (1 - \alpha)  \tilde k_{i,j}^{(A)}$ is a different closure.
\end{example}


\section{Discussion}\label{sec:disc}
A mean-field approximation to Markovian epidemics is widely used in various scientific disciplines. Its strength relies on a drastic reduction of the number of equations, which are a priori exponentially (in the number of nodes) many, but in the mean-field model are polynomially many. Usually, mean-field models are made possible by using some kind of closure which are approximations of higher-order moments by lower-order ones, e.g. approximating the expected number of triples by the expected number of singles and pairs. The earlier such approximations are performed the more information is lost. For example, closing the pairs is simpler than approximating triples but the accuracy of approximation will be worse. For the standard SIS and SIR epidemics one typically closes (or approximates) triples~\cite{rand1999correlation,keeling1999effects,KissMillerSimon}.

In this article we presented a top-down probabilistic approach to obtain a rigorous formula for the expected number of $k$-tuples (and in particular triples) in multi-type epidemics (Prop. \ref{prop:triples}). No assumptions were necessary for the calculations, so the proposition works in all types of networks. We then
proceeded by approximating the expected value of triples in various ways. Each method of approximating gave rise to a closure - either in a theoretical or a numerical/statistical way.

Our main contributions outside of the robust theoretical framework mentioned above, are summarised below:
\begin{enumerate}
\item Derivation of a ``moments" closure that performs well in SIS epidemics on regular (or networks with low degree heterogeneity), Erd\H{o}s-R\'enyi and bi-modal  graphs.
\item Derivation of closures already used in the literature. We provide a careful list of background assumptions that are necessary for each approximation to work, and which are sufficient to derive these classical closures.
\item Derivation of a few new closures which either bare similarities with existing ones or allowed us to obtain epidemic threshold results that are well known in the literature.
\item A derivation of the Poisson link model from the multinomial link model, that naturally arises for Erd\H{o}s-R\'enyi graphs of low connection probability.
\item Numerical verification of theoretical assumptions, particularly the marginal distribution of types of nodes around susceptible ones.
\end{enumerate}

The new closures show some promise but need some more testing and better understanding when these agree with known ones, on which network do they work best and in what way 
are they different from the existing ones. The new closure based on assuming a uniform distribution is a good example of how our method can be extended and applied beyond epidemic models. Finally, we hope that these new insight may stimulate more research and may lead to some more rigorous results for closures and closed systems in whatever context or application.

\end{document}